\documentclass{stacs_proc}

\title{Deterministically Isolating a Perfect Matching in Bipartite Planar 
Graphs}
\author[ref1]{S. Datta}{Samir Datta}
\address[ref1]{Chennai Mathematical Institute, Chennai, India.}
\email{sdatta@cmi.ac.in}

\author[ref2]{R. Kulkarni}{Raghav Kulkarni}
\address[ref2]{University of Chicago, Chicago, USA.}
\email{raghav@cs.uchicago.edu}
\thanks{This work was done while the second author was visiting Chennai Mathematical Institute}

\author[ref3]{S. Roy}{Sambuddha Roy}
\address[ref3]{ IBM Research Laboratory, New Delhi, India.}
\email{sambuddha@in.ibm.com}

\usepackage{amsmath,amssymb,latexsym}
\usepackage{graphicx, epsfig}
\theoremstyle{plain}\newtheorem{claim}[thm]{Claim}

\newcommand{\Log}{\mbox{{\sf L}}}

\newcommand{\NL}{\mbox{{\sf NL}}}
\newcommand{\UL}{\mbox{{\sf UL}}}

\newcommand{\UPM}{\mbox{{\sf UPM}}}

\newcommand{\Pt}{\mbox{{\sf P}}}

\newcommand{\SPL}{\mbox{{\sf SPL}}}
\newcommand{\NC}{\mbox{{\sf NC}}}

\begin{document}
\maketitle

\stacsheading{2008}{229-240}{Bordeaux}
\firstpageno{229}

\begin{abstract}
We present a deterministic way of assigning
small (log bit) weights to the edges of a bipartite planar
graph so that the minimum weight perfect matching becomes 
unique. The {\em isolation lemma} as described in 
\cite{mvv87} achieves the same for general graphs using
a randomized weighting scheme, whereas we can do it 
deterministically when restricted to bipartite planar graphs.
As a consequence, we reduce both decision and construction versions
of the matching problem to testing whether a matrix is
singular, under the promise that its determinant is $0$ or $1$,
thus obtaining a highly parallel {\SPL} algorithm
for bipartite planar graphs. This improves the
earlier known bounds of non-uniform \SPL\ by \cite{arz99} and
\NC$^2$ by \cite{mn95,mv00}. It also rekindles the hope
of obtaining a deterministic parallel algorithm for 
constructing a perfect matching in non-bipartite planar graphs, 
which has been open for a long time. Our techniques are
elementary and simple.
\end{abstract}
\section {Introduction}
The {\em Matching Problem} is one of the most well-studied in the field of 
parallel complexity. Attempts to solve 
this problem have led to the development of a variety of combinatorial, algebraic and 
probabilistic tools which have applications even outside the field. 
Since the problem is still open, researchers linger
around it in search of new techniques, if not to solve it in its whole 
generality, then at least under various natural restrictions. 
In this paper, we will focus on the deterministic complexity of the 
{\em Matching Problem} under its planar restrictions.
 
\subsection {The Matching Problem} 
\begin {definition}
A matching in an undirected graph is a collection of edges which have no 
endpoint in common. 
\end {definition}
Such a collection of edges is called ``independent".
See \cite{lp86} for an excellent survey on matchings.

The computational question one can ask here is, given
a graph, to find a matching of the maximum cardinality.

\begin{definition}
A perfect matching in a graph is a collection of independent edges which cover
all the vertices.
\end{definition}
One may ask various computational questions about
perfect matchings in graphs. We will consider the following three questions: \\
{\em Question 1:} (Decision) Is there a perfect matching in a given graph ? \\
{\em Question 2:} (Search) Construct a perfect matching in a graph,
if it exists. \\
{\em Question 3:} (Uniqueness Testing or \UPM) Does a given graph have exactly one perfect matching? \\
There are polynomial time algorithms for the above graph matching problems
and historically people have been interested in studying the parallel 
complexity of all the three questions above. The \UPM\ question for
bipartite graphs is deterministically parallelizable \cite{kvv85}
(i.e. it lies in the complexity class \NC; see any standard complexity
text for a formal definition, say \cite{v99}). 
 Intuitively, \NC\ is a complexity class consisting of the problems having a
parallel algorithm which runs in polylogarithmic time using polynomially 
many processors which have access to a common memory. 

It is the class consisting of so called ``well parallelizable''
problems. \NC\ is inside \Pt\ - problems having
a sequential polynomial time algorithm. 
Whether the {\em Matching Problem} is
deterministically parallelizable remains a major open question in 
parallel complexity. 

\begin{oprob} {Is Matching in \NC\ ?}
\end {oprob}

The best we know till now is that {\em Matching} is in {\em Randomized \NC}.
See for example, \cite{kuw86,mvv87}.
Several restrictions of the matching problem are known to be in \NC,
for example, bipartite planar graphs \cite{mn95,mv00}, graphs with
polynomially bounded number of perfect matchings \cite{gk87} etc.
Whether the search version reduces to the decision version has also not been answered yet.

\subsection {Randomized Isolation Lemma }
\begin {lemma} \cite{mvv87} One can randomly assign polynomially bounded 
weights to the edges of a graph so that with high probability the minimum
weight perfect matching becomes unique.
\end{lemma}

Using the isolation lemma, \cite{mvv87} obtained a simple {\em Randomized \NC}
algorithm for finding a perfect matching in arbitrary graphs.

\subsection {Matching in \SPL/poly }
Allender et al \cite{arz99} proved a non-uniform bound for matching problem 
which
allows us to replace the randomization by a polynomial length advice string.
Hence, we know that matching is parallelizable with polynomial bit advice.

\begin{definition} 
\SPL\ is a promise class that is characterized by the problem of checking 
whether a a matrix is singular under the promise that its determinant is
either $0$ or $1$. The corresponding non-uniform class
\SPL/poly is \SPL\ with a polynomial bit advice.
\end{definition}

\SPL\ is inside $\oplus \Log$ and inside $\oplus_p \Log$ for all $p.$ While
\UL\ (unambiguous Logspace) is inside {\SPL}, \NL\ (nondeterministic
Logspace) is incomparable with \SPL.  Both \NL\ and \SPL\ are known to 
lie inside $\NC^2.$

\begin{definition}
A language is said to be in \SPL/poly if for every positive integer $n$ there
exists an advice string $A_n$ such that:
\begin{itemize}
\item length of $A_n$ is polynomially bounded in $n$ 
\item once $A_n$ is given, the membership of any input of size $n$
can be decided in \SPL.
\end{itemize}
\end{definition}

\begin{theorem} \cite{arz99} Matching is in \SPL/poly.
\end{theorem}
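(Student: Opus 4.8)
The plan is to show that Matching lies in \SPL/poly by combining the randomized isolation lemma with the characterization of \SPL\ as singularity testing under a determinant promise. The key idea is that once a weight assignment \emph{isolates} a minimum-weight perfect matching, the combinatorial problem becomes an algebraic one whose answer is captured by the determinant of a suitably constructed matrix; the nonuniform advice will supply a good weight assignment deterministically, removing the randomness.

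First I would invoke the isolation lemma to fix, for each input size $n$, a single weight function $w$ on the edges (polynomially bounded) under which \emph{every} graph on $n$ vertices has a unique minimum-weight perfect matching. Since the lemma guarantees this with high probability for any fixed graph, a counting/union-bound argument over all graphs of size $n$ shows that some single weight assignment works simultaneously for all of them; this good weight assignment is precisely the advice string $A_n$, and its length is polynomial in $n$. This is the step that converts ``randomized'' into ``nonuniform deterministic.''

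Next, given the isolating weights, I would encode the weighted graph into a matrix whose determinant counts signed perfect matchings weighted by $2^{w(M)}$ (a Tutte/Edmonds-style matrix with entries $x^{w(e)}$ evaluated at an appropriate power of two). Because the minimum-weight matching is unique, the lowest-order term of this determinant is contributed by exactly one matching, so reading off an appropriate bit of the determinant decides existence of a perfect matching, and the determinant's value lets one recover the matching edge by edge. The crux is to arrange the construction so that the relevant quantity is guaranteed to be $0$ or $1$, matching the \SPL\ promise; this is achieved by taking the determinant modulo a power of two at the isolating scale, so that all but the unique minimum-weight matching cancel or lie in higher-order bits.

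The main obstacle I expect is verifying the promise condition precisely: one must ensure that the matrix is built so that the tested determinant value is genuinely $0$ (no perfect matching) or $1$ (unique minimum matching), with no intermediate values, and that both the decision and the search versions reduce to \SPL\ computations rather than to general determinant evaluation. Handling the signs in the Tutte matrix so that distinct matchings do not spuriously cancel or reinforce---and doing so within the logarithmic weight bound---is the delicate part. Once the determinant promise is established, membership in \SPL\ follows from its definition as singularity testing under exactly this promise, completing the proof.
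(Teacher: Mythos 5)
There is a genuine gap, and it sits exactly at the step you describe as the conversion of ``randomized'' into ``nonuniform deterministic.'' The isolation lemma with polynomially bounded weights gives a per-graph failure probability of at most $m/R$, where $R$ is the (polynomial) weight range --- so at best inverse-polynomial, and in the standard statement only $1/2$. The number of graphs on $n$ vertices is $2^{\Theta(n^2)}$, so a union bound over all of them would require per-graph failure probability below $2^{-\Theta(n^2)}$, which with polynomially bounded weights is impossible; pushing the failure probability that low forces weights of $\Omega(n^2)$ bits, i.e.\ exponentially large values, which in turn destroys the determinant construction (the entries $x^{w(e)}$ would have exponential degree, outside what \Gl/\SPL\ machinery can evaluate). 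So your claim that ``some single weight assignment works simultaneously for all of them'' does not follow; whether such a universal polynomially bounded isolating assignment even exists is not known. The actual proof in \cite{arz99} (following the same amplification used for \NL\ $\subseteq$ \UL/poly) takes as advice a \emph{list} of $k = n^2 + O(1)$ independent weight assignments: for a fixed graph the probability that all $k$ fail is $2^{-k}$, and now the union bound over $2^{\binom{n}{2}}$ graphs succeeds. But this creates the real technical problem, which your proposal never confronts: the machine does not know \emph{which} assignment in the list isolates its particular input, and naively querying determinant coefficients under a non-isolating assignment can return values other than $0$ and $1$, violating the \SPL\ promise. Handling the selection among candidate weightings while keeping every queried \Gl\ value within the $\{0,1\}$ promise is the heart of the argument in \cite{arz99}, not a routine detail.

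A secondary problem is your mechanism of evaluating the matrix at a power of two and reading off bits of a single determinant value. A bit of a \Gl\ function is not itself a \Gl\ function, and the evaluated determinant is a huge integer, not something promised to be $0$ or $1$; moreover the number of matchings at a given weight can be exponential, so contributions smear across bit positions via carries. The standard route --- the one this paper itself uses in its extraction procedure --- keeps $x$ symbolic and uses the fact that each individual coefficient of $x^{2i}$ in the determinant is itself expressible as a determinant (\cite{mv97,v99}), then queries these coefficients in increasing order of $i$; under genuine isolation the first nonzero coefficient is exactly $1$ (the square of the $\pm 1$ Pfaffian term of the unique minimum matching), so each query honestly meets the \SPL\ promise, and the search version follows by edge deletion using the closure $\Log^{\SPL} = \SPL$, which also has to be proved. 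Your worry about sign cancellations at the minimum weight, by contrast, is a non-issue: uniqueness of the minimum-weight matching means exactly one term realizes the lowest degree, so nothing can cancel there.
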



\subsection {Matchings in Planar Graphs and Deterministic Isolation}
The situation for planar graphs is interesting because of the 
fact that counting the number of perfect matchings in planar graph
is in \NC\ (\cite{k67,v88}) whereas constructing one perfect matching is not
yet known to be parallelizable. However,
for bipartite planar graphs, people have found \NC\ algorithms
\cite{mn95,mv00}. 

The isolation lemma crucially uses randomness in order to isolate a
minimum weight set in an arbitrary set system. It is conceivable,
however, to exploit some additional structure in the set system
to eliminate this randomness.  Indeed, recently \cite{btv07} building upon
a technique from \cite{adr05} were able to 
isolate a directed path in a planar graph by assigning small 
{\em deterministic} weights to the edges. The lemma that sits at the
heart of that result says that there is a simple deterministic way to assign
weights so that each directed cycle (in a grid graph)
gets a non-zero weight. This is shown to imply 
that if two paths get the same weight neither of them is a min-weight path.

Motivated by their result we explore
the possibility of such an isolation for perfect matchings in planar graphs.
Our attempt is to assign weights so that the alternating sum is non-zero
for each alternating cycle - here alternating sum is the signed sum of weights
where the sign is opposite for successive edges. Since alternating cycle result
from the super-imposition of two matchings, we are able to isolate a min-weight
matching.

Therefore, we are able to devise an \NC\ algorithm for bipartite planar 
graphs which
is conceptually simple, different from the other known algorithms
and tightens its complexity to the smaller class \SPL.
The search problem for matching in non-bipartite planar graphs still remains
open even though the corresponding decision and counting versions
are known to be in \NC. Our algorithm rekindles the hope for solving
general planar matching in \NC.

\section {Preliminaries}
Here we describe the technical tools that we need in the rest of the paper.
Refer to any standard text (e.g. \cite{v99}) for definitions of the
complexity classes $\oplus \Log$, $\oplus_p \Log$, {\NL}, {\UL}, $\NC^2$ .
For graph-theoretic concepts, for instance, {\em planar graph}, {\em outerplanar graph},
{\em spanning trees}, {\em adjacency matrix}, {\em Laplacian matrix} of a graph, we refer the reader to any standard text in graph theory
(e.g. \cite{diestel}).

\subsection {Definitions and Facts}
We will view an $n \times n$ grid as a graph simply by putting the nodes at the 
grid points and letting the grid edges act as the edges of the graph. 
\begin{definition} 
{\em Grid graphs} are simply subgraphs of an $n \times n$ grid for some $n$.
See Figure~\ref{gridgraph} for an example.
We call each unit square of the grid a {\em block}.
\end{definition}
\begin{definition}
We call a graph an {\em almost grid graph} if it consists of a grid graph
and possibly some additional diagonal edges which all lie in some single
row of the grid.  Moreover all the diagonal edges
are parallel to each other. See Figure~\ref{almostgrid}. 
\end{definition}

In this paper we will consider weighted grid graphs where each edge is 
assigned an integer weight. \\

\begin{figure}
\includegraphics[scale = 0.25] {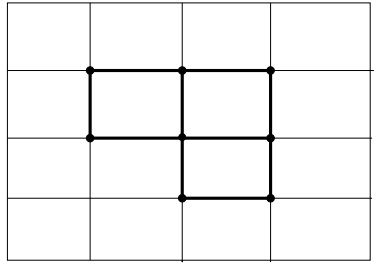}
\caption{A Grid Graph}
\label{gridgraph}
\end{figure}

\begin{figure}
\includegraphics[scale = 0.25] {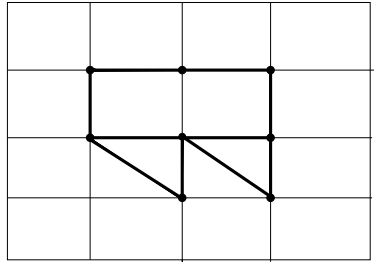}
\caption{An Almost Grid Graph}
\label{almostgrid}
\end{figure}

\begin{definition} 
\begin{enumerate}
\item Given a grid,  assign a ``+" sign to all the vertical edges 
and a ``-" sign to all the horizontal edges.
\item Assign a sign of $(-1)^{i+j}$ to the block in the
$i^{\mbox{th}}$ row and $j^{\mbox{th}}$ column (adjacent blocks get opposite signs).
\end{enumerate}
\end{definition}

\begin{definition}
Given a weighted grid graph $G$, the
{\em circulation} of a block $B$(denoted $circ(B)$) in $G$ is the signed sum of weights of the edges of it:
$circ (B) = \sum_{e \in B} {sign(e) weight(e)}$.
\end{definition}
\begin{definition} 
Given a weighted grid graph $G$ and a simple cycle 
$C = (e_0, e_1, \dots , e_{2k-1})$ in it, 
where $e_0$ is, say, the leftmost topmost vertical edge of $C$; 
we define the circulation of a cycle $C$ as 
$circ(C) = \sum_{i=0}^{2k-1} { (-1) ^ i weight(e_i)}.$
\end{definition}

The following lemma plays a crucial role in constructing 
non-vanishing circulations in grid graphs as will be described in the
next section.
\begin{lemma}{\em Block Decomposition of Circulations:}
The absolute value of the circulation of a simple cycle $C$ in a grid graph $G$ is 
equal to the signed sum of
the circulations of the blocks of the grid which lie in the interior of $C.$ \\
$|circ (C)| =  \sum_{B \in interior(C)} {sign(B)circ(B)}$
\label{blockdec}
\end{lemma}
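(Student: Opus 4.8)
The plan is to treat this as a discrete analogue of Green's theorem: $circ(B)$ is a signed sum of edge-weights around the boundary of a single block, and I want to show that summing these over the interior blocks telescopes down to the signed sum around the boundary cycle $C$. First I would expand the right-hand side by linearity in the weights,
\[
\sum_{B \in interior(C)} sign(B)\, circ(B) \;=\; \sum_{e} weight(e) \sum_{\substack{B \ni e \\ B \in interior(C)}} sign(B)\, sign(e),
\]
where the inner sum ranges over the (at most two) blocks having $e$ on their boundary that lie inside $C$.

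Next I would classify each edge $e$ by how many of its two incident blocks lie in $interior(C)$. If both lie inside, then $e$ is an interior edge; its two blocks occupy consecutive rows or consecutive columns, so their row-plus-column indices differ by one and $sign(B)=(-1)^{i+j}$ assigns them opposite signs, whence the contribution of $e$ cancels. If neither incident block lies inside, $e$ contributes nothing. Thus only the edges of $C$ itself survive, each appearing with coefficient $sign(e)\,sign(B_e)$, where $B_e$ is the unique interior block incident to $e$. This is where the two-block-sharing structure of the grid does the real work; the cancellation rests entirely on adjacent blocks carrying opposite signs.

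It then remains to match, for the boundary edges, this surviving coefficient with the alternating pattern $(-1)^i$ in the definition of $circ(C)$. I would orient $C$ so that $interior(C)$ always lies to its left and write $v_i=(x_i,y_i)$ for the tail of the directed edge $e_i$. A short case analysis over the four directions of travel—using that vertical edges carry sign $+1$, horizontal edges sign $-1$, and that the direction of travel determines which incident block is the interior one—shows in every case that
\[
sign(e_i)\, sign(B_{e_i}) = (-1)^{x_i + y_i + 1}.
\]
Since each step of the traversal changes exactly one coordinate by one, the parity of $x_i+y_i$ flips at every step, so this coefficient equals a fixed sign times $(-1)^i$. The hard part is precisely this uniformity: one must verify that the dependence on the edge's orientation and on which side the interior block lies conspires with the block sign $(-1)^{i+j}$ to leave only the position parity $x_i+y_i$, with no residual dependence on the direction of travel.

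Finally, substituting back gives $\sum_{B\in interior(C)} sign(B)\,circ(B) = \varepsilon \sum_i (-1)^i weight(e_i) = \varepsilon\, circ(C)$ for a global sign $\varepsilon=\pm 1$, which absorbs any discrepancy between the orientation and starting edge used here and the leftmost-topmost vertical edge $e_0$ fixed in the definition (relabeling the start of an alternating sum around an even cycle, or reversing it, can only flip the overall sign). Taking absolute values then yields the claimed identity $|circ(C)| = |\sum_{B\in interior(C)} sign(B)\, circ(B)|$.
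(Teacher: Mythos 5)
Your proposal is correct and follows essentially the same route as the paper's proof: expand the block sum by linearity, cancel every interior edge using the opposite signs of adjacent blocks, and show that the surviving coefficients on the boundary edges alternate around $C$, so the sum equals $\pm\, circ(C)$. The only difference is bookkeeping in the last step: where the paper compares consecutive boundary edges pairwise (lying on adjacent versus diagonally adjacent blocks), you derive the explicit per-edge coefficient $(-1)^{x_i+y_i+1}$ from the interior-on-the-left convention and the parity flip of $x_i+y_i$ along the walk --- which incidentally also covers the convex-corner case where two consecutive boundary edges border the \emph{same} interior block, a case the paper's two-way split passes over silently.
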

\begin{proof}
 Consider the summation on the right hand side.
The weight of any edge in the {\em interior} of $C$ will get cancelled
in the summation because that edge will occur in exactly two blocks 
which are adjacent and hence with opposite signs.
Now what remains are the boundary edges. Call two boundary edges 
{\em adjacent} if they appear consecutively on the cycle $C$.

\begin{claim}\label{claim1} Adjacent boundary edges get
opposite signs in the summation on the right hand side above.
\end{claim}
\begin{proof}
We have to consider two cases, either the adjacent boundary edges
lie on adjacent blocks, in which case since adjacent blocks have
opposite signs, these edges will also get opposite signs as they are
both vertical or horizontal edges. See Figure~\ref{gridSignWeight}.
In the other case, when adjacent boundary edges do not lie on adjacent blocks,
they lie on two blocks which are diagonally next to each other.
In this case, both blocks will have the same sign but since one edge is vertical
and the other is horizontal, the effective sign of the edges will be opposite.
See Figure~\ref{gridSignWeight}.
Hence, the adjacent boundary edges will get opposite sign in the summation.
This completes the proof that the right hand side summation is precisely
+ $circ (C)$ or - $circ (C).$ 
\end{proof}
\end{proof}

\begin{figure}
\centering
\begin{tabular}{cc}
\epsfig{file=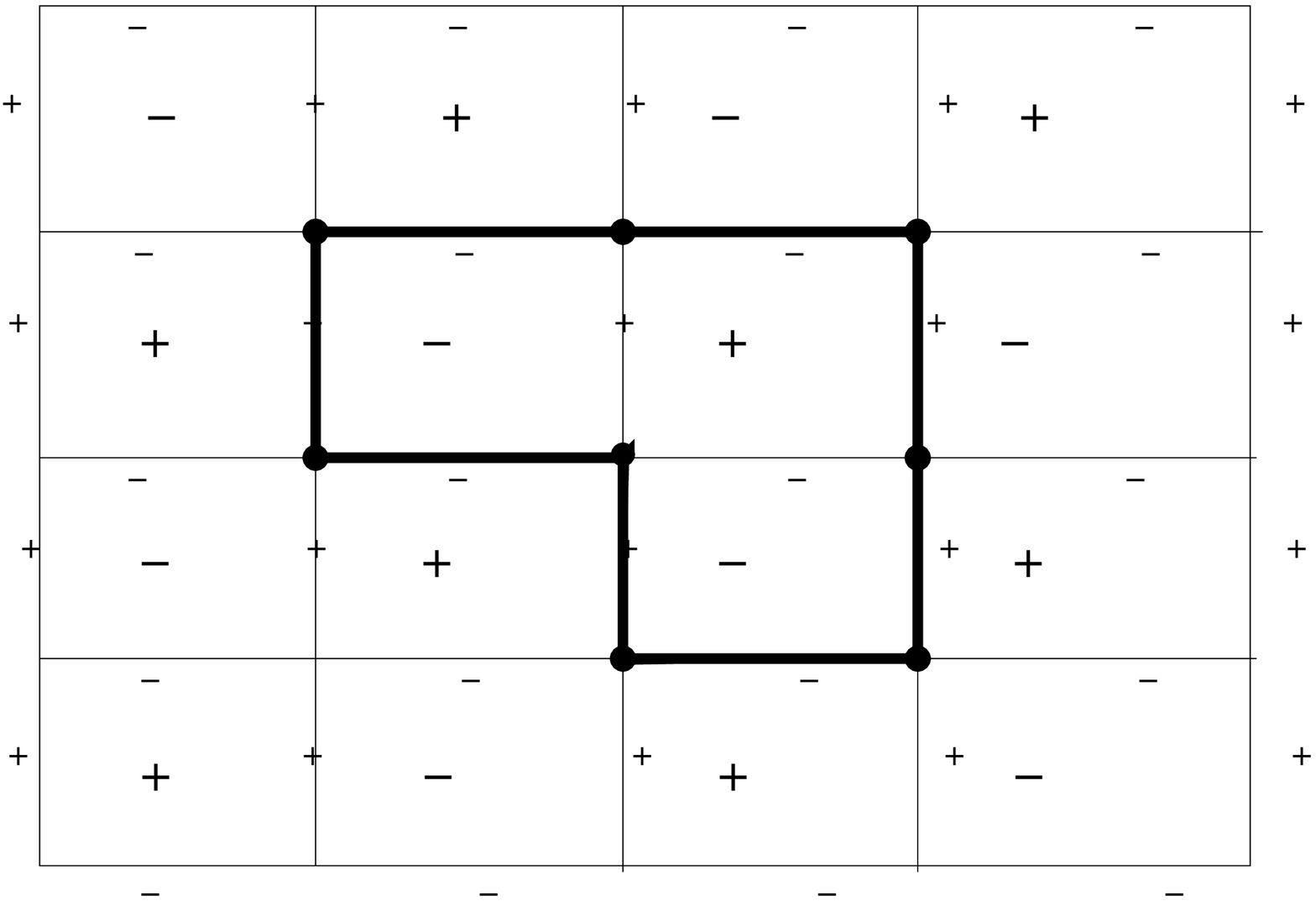,width=0.5\linewidth,clip=} &
\epsfig{file=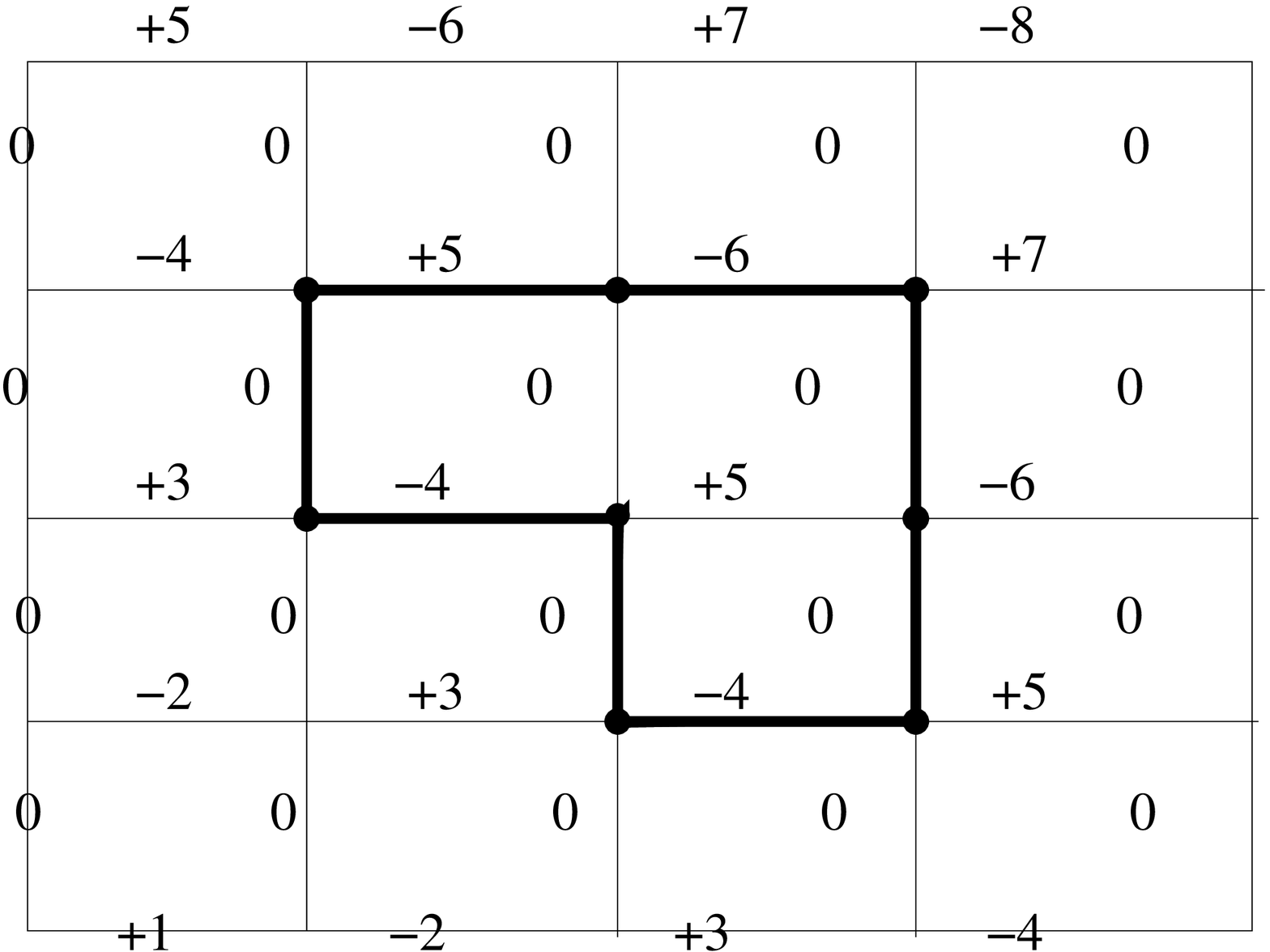,width=0.5\linewidth,clip=} \\
\end{tabular}
\caption{Signs and Weights of the blocks and the edges of a grid}
\label{gridSignWeight}
\end{figure}
We will also have occasion to employ the following lemma and we
record it here:
\begin{lemma} {\em Temperley's Bijection:} 
The spanning trees of a planar graph are in one to one correspondence with perfect
matchings in a bipartite planar graph. Moreover the correspondence is weight
preserving. 
\end{lemma}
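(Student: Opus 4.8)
The plan is to make Temperley's classical construction explicit and then verify bijectivity and weight preservation as two separate tasks. Fix a connected planar graph $G$ with a planar embedding, and write $V$, $E$, $F$ for its vertices, edges and faces. I would build the bipartite target graph $H$ by placing one node at each vertex of $G$, one node inside each face of $G$, and one node at the midpoint of each edge of $G$, joining every edge-midpoint $m_e$ to the two endpoints of $e$ and to the two faces bordering $e$. Then $H$ is bipartite, with the midpoints on one side and the vertex/face nodes on the other, and it is planar because the new edges can be routed inside the original faces without crossings. Euler's formula gives $|E| = |V| + |F| - 2$, so the two sides of $H$ differ in size by exactly $2$; deleting one vertex $v_0$ together with an incident face $f_0$ (say, a vertex on the outer face and the outer face itself) produces a balanced bipartite planar graph $H'$.

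Next I would set up the two maps. Given a spanning tree $T$ of $G$ rooted at $v_0$, every non-root vertex has a unique parent edge, and matching each such edge's midpoint to its child endpoint covers all vertex nodes except $v_0$ and all midpoints of tree edges. The co-tree $E \setminus T$ is dual to a spanning tree $T^{*}$ of $G^{*}$ rooted at $f_0$; matching the midpoint of each co-tree edge to its child face covers the remaining midpoints and all face nodes except $f_0$. The union of these two partial matchings is a perfect matching of $H'$, and a count using $|E| - |V| + 1 = |F| - 1$ confirms that everything is covered. In the reverse direction, from a perfect matching $M$ of $H'$ I would read off the set $T$ of edges whose midpoints are matched to a vertex node, and then argue that $T$ is a spanning tree.

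The main obstacle is this last claim, namely that the edge set $T$ extracted from an arbitrary perfect matching is always a spanning tree and not merely a forest or a subgraph containing a cycle. Since each of the $|V| - 1$ surviving vertex nodes is matched to the midpoint of a distinct incident edge, $T$ has exactly $|V| - 1$ edges, so $T$ is a spanning tree precisely when it is acyclic. I would rule out a cycle $C \subseteq T$ by a planar-duality descent: the edges dual to $C$ form a cut of $G^{*}$ that contains no edge of $T^{*} := E \setminus T$ (because $C \subseteq T$), so the faces enclosed by $C$ are separated from $f_0$ in $(V(G^{*}), T^{*})$. As the matching sends each enclosed face to an incident edge of $T^{*}$, following these edges among the finitely many enclosed faces must close up into a dual cycle, which in turn encloses primal vertices cut off from $v_0$; iterating shrinks the enclosed region at each step and hence cannot continue, a contradiction. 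This shows $T$ is a spanning tree and, dually, that $T^{*}$ is a spanning tree of $G^{*}$, after which checking that the two maps are mutually inverse is routine.

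Weight preservation is the easy part. I would give each $H'$-edge of the form $m_e$--(endpoint of $e$) the weight of $e$ in $G$, and each $H'$-edge of the form $m_e$--(face) the weight $0$. Under the forward map the matched dual edges contribute nothing, so the weight of the matching equals $\sum_{e \in T} w(e)$, which is exactly the weight of the spanning tree $T$. Hence the correspondence is weight preserving, completing the proof.
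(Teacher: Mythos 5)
Your proof is correct, but it necessarily takes a different route from the paper, because the paper offers no proof of this lemma at all: it states the bijection with a historical attribution to Temperley (circa 1967) and points to \cite{kpw00} for the generalized, weight-preserving version. What you have written is a self-contained reconstruction of the classical construction --- the overlay graph on vertex, face, and edge-midpoint nodes, deletion of a vertex $v_0$ together with an incident face $f_0$, the tree/co-tree rooting argument for the forward map, and acyclicity of the extracted edge set for the reverse map --- and your weight assignment (the weight of $e$ on the midpoint--endpoint edges, zero on the midpoint--face edges) gives weight preservation in exactly the form the paper later exploits for minimum weight spanning trees. Two remarks on the substance. First, your choice of $f_0$ as the outer face with $v_0$ on its boundary is essential rather than cosmetic: incidence of the deleted pair is precisely what blocks the degenerate configuration in the acyclicity argument, since the outer face is never enclosed by a cycle $C \subseteq T$ and a vertex on the outer boundary is never strictly enclosed by one; for a non-incident pair the stated correspondence fails in general. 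Second, your descent is sound as sketched (the map sending each enclosed face to the opposite face across its matched edge is a function on finitely many faces, so it closes into a dual cycle with distinct edges, which strictly shrinks the enclosed region), but it can be replaced by a one-shot count that avoids iteration entirely: if $C \subseteq T$ were a cycle, let $v'$, $e'$, $f'$ denote the vertices, edges, and faces strictly inside $C$; Euler's formula applied to the closed disk bounded by $C$ gives $f' = e' - v' + 1$; every strictly enclosed vertex is matched to a distinct strictly enclosed edge (its matched edge is incident to it, hence cannot lie on $C$), and every strictly enclosed face likewise (its matched edge cannot lie on $C$ since all edges of $C$ belong to $T$ and are matched to vertices); as $v_0$ and $f_0$ lie outside $C$, this matches $v' + f' = e' + 1$ nodes to distinct midpoints among only $e'$ enclosed edges, an immediate contradiction. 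With acyclicity established, the step you call routine is indeed routine: leaf-stripping the rooted tree shows the only bijection from non-root vertices to incident tree edges is the parent map (and dually for faces), so the two maps are mutually inverse.
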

This bijection was first observed by Temperley around 1967.
Recently \cite{kpw00} have found 
a {\em Generalized Temperley Bijection} which gives a one-to-one 
weight preserving mapping between directed rooted spanning trees 
or arborescences in a
directed planar graph and perfect matchings in an associated 
bipartite planar graph.

\subsection {Planar Matching and Grid Graphs}
Grid graphs have turned out to be useful for solving the reachability question 
in directed planar graphs, cf. \cite{abc06,btv07}. 
Motivated by this fact we explore the possibility of reducing planar
matching problem to that of grid graphs. Non-bipartiteness 
becomes an obstacle here which leaves us with  the following observations:
\begin {lemma} One can convert any bipartite planar graph into a 
grid graph such that the perfect matchings remain in
one-to-one correspondence.
\end {lemma}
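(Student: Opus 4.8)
The plan is to \emph{draw} $G$ on a sufficiently fine grid, placing every vertex at a grid point and routing every edge as a path that runs along grid edges and is vertex-disjoint from the other edge-paths (except at shared endpoints); the grid graph $G'$ is then the union of all these paths. The single fact that makes this work for matchings is the following subdivision principle: if an edge $uv$ is replaced by a path with edges $f_1,\dots,f_{2m+1}$ (an odd number of edges, hence an even number of new degree-$2$ vertices), then in any perfect matching of $G'$ the path is forced into one of exactly two alternating states --- the odd-indexed state $\{f_1,f_3,\dots,f_{2m+1}\}$ covers both $u$ and $v$ and corresponds to $uv$ being \emph{in} the matching, while the even-indexed state $\{f_2,f_4,\dots,f_{2m}\}$ covers only the internal vertices and corresponds to $uv$ being \emph{out}. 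Thus perfect matchings of $G$ and of $G'$ are in one-to-one correspondence, and putting $weight(uv)$ on $f_1$ and $0$ on the remaining segments makes the correspondence weight-preserving. So it suffices to guarantee that \emph{every} edge is drawn as a path of odd length.

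The parity of a grid path is not something one can repair locally: any walk between grid points $u=(x_u,y_u)$ and $v=(x_v,y_v)$ has length congruent to $(x_u+y_u)+(x_v+y_v)$ modulo $2$, so its parity depends only on the grid $2$-colouring $\gamma(p)=(x_p+y_p)\bmod 2$ of its endpoints. Hence all edge-paths are odd precisely when the placement respects the bipartition of $G$, i.e.\ when the two colour classes of $G$ are placed on grid points of opposite $\gamma$-value. The task therefore reduces to producing a planar grid drawing of $G$ whose induced grid-colouring agrees with the bipartition. Since grid graphs have maximum degree $4$, I would first reduce the degree: a vertex $v$ of the $A$-side of degree $d$ is replaced by the path $p_1 w_1 p_2 w_2 \cdots w_{d-1} p_d$, attaching the $i$-th edge formerly at $v$ to the port $p_i$ (and symmetrically on the $B$-side). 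This gadget is a path on the odd number $2d-1$ of vertices, so it has no perfect matching, and neither does any of its two-port deletions; deleting a single port $p_i$ splits it into two even paths, each with a unique perfect matching. Consequently exactly one port is matched externally in every perfect matching, the gadget is matching-preserving, and the graph now has maximum degree $3$.

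With the degree bounded, I would invoke a standard orthogonal (grid) drawing of the planar graph of degree $\le 4$, then scale and refine the grid to create empty lanes and slide the two colour classes onto grid points of opposite parity, rerouting the incident segments through the freed lanes. Composing the two matching-preserving bijections --- degree reduction followed by the odd-path subdivision --- yields the claimed one-to-one correspondence between the perfect matchings of $G$ and those of the grid graph $G'$.

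The hard part is this last step: because a grid path's parity is rigidly fixed by the positions of its endpoints and cannot be adjusted by any local detour, the colour constraint cannot be met edge-by-edge but must be enforced globally, \emph{while} keeping the edge-paths vertex-disjoint and the drawing planar. The scaling and refinement provide enough room in principle, but verifying that the repositioning always succeeds without creating crossings or collisions --- and that the whole construction remains computable in logarithmic space, as required for the $\SPL$ consequences --- is the delicate point that the proof must settle.
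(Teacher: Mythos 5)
Your route is essentially the paper's: the paper proves this lemma by a one-line citation to \cite{dkl07}, which adapts the planar-to-grid embedding of \cite{abc06} in just the way you outline --- reduce degrees by a matching-preserving path gadget, draw the resulting graph on a grid, and realize each edge as an odd-length grid path, the parity constraint being met by placing the two colour classes of the bipartition on grid points of opposite checkerboard colour. Both of your reductions are sound. The odd-path subdivision principle is the standard one, and your $(2d-1)$-vertex port gadget does what you claim: since the gadget has an odd number of vertices, an odd number of ports must be matched externally, and if three ports $p_i, p_j, p_k$ with $i<j<k$ were matched externally, the segment strictly between $p_i$ and $p_j$ would contain $2(j-i)-1$ vertices and could not be perfectly matched; giving the ports the colour of the original vertex preserves bipartiteness, which is exactly what makes the global parity constraint consistent (as you note, an odd-length requirement on every edge-path is satisfiable precisely because $G$ is bipartite).

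The step you flag as unfinished is genuine but is precisely what the cited construction supplies, and it closes along the lines you anticipate. Starting from the logspace grid drawing of the degree-$\le 3$ graph as in \cite{abc06}, scale the drawing by a constant factor (four suffices), so that every vertex lands on a point of checkerboard colour $0$ and sits inside a vacated neighbourhood crossed by no other edge-path; then shift each vertex of one colour class by one unit and reroute its at most three incident paths within the vacated cells, keeping them internally disjoint. Because this repair is a local rule applied independently at each vertex --- the constant-factor scaling guarantees the modifications at distinct vertices never interact --- planarity, vertex-disjointness of the edge-paths, and logspace computability are all preserved, and afterwards every edge joins points of opposite checkerboard colour, so every edge-path is odd. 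Your observation that parity cannot be fixed by rerouting alone (a grid walk between fixed endpoints has fixed parity) correctly identifies why the repair must move vertices rather than detour paths; with the scaling argument made explicit, your proposal is not a different proof but an accurate reconstruction of the one the paper delegates to \cite{dkl07}.
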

\begin{proof} This is described in \cite{dkl07}.
It follows closely the procedure for embedding a planar
graph into a grid, described by \cite{abc06}. 
\end{proof}

Though non-bipartiteness is an issue, we can get rid of it to
a certain extent, though as expected, not completely .
\begin {lemma} Any planar graph, not necessarily bipartite,
can be converted to an {\em almost 
grid graph} while maintaining the one to one correspondence 
between the perfect matchings.
\label{almost}
\end {lemma}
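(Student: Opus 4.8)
The plan is to start from the bipartite grid embedding established just above and to isolate the single place where bipartiteness is used. Recall that in that construction one embeds the planar graph into a grid via the routing of \cite{abc06,dkl07}, replacing each edge $uv$ by a grid path, and that perfect matchings are preserved provided every such path is subdivided an \emph{even} number of times, i.e.\ has \emph{odd} length. Indeed, a path $u=x_0,x_1,\dots,x_{2k+1}=v$ of odd length admits exactly two ways to cover its internal vertices by a matching --- either $x_1x_2,\dots,x_{2k-1}x_{2k}$ (leaving $u,v$ to be matched outside) or $x_0x_1,\dots,x_{2k}x_{2k+1}$ (covering $u,v$ internally) --- and these correspond bijectively to \emph{not using} or \emph{using} the original edge $uv$. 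A path of even length has an odd number of internal vertices and therefore forces one endpoint into the path, destroying this correspondence. So the entire content of the lemma is to route every edge as an odd-length path.

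First I would fix a planar embedding of $G$, place its vertices at grid points, and colour each grid point $(i,j)$ by $(i+j)\bmod 2$. Since horizontal and vertical steps flip this colour while a diagonal step preserves it, a path using $h$ horizontal/vertical edges and $d$ diagonal edges has length $h+d$ and joins equally coloured endpoints iff $h$ is even; hence such a path has odd length exactly when it uses an odd number of diagonals. Calling an edge $uv$ \emph{bad} when $u,v$ receive the same colour, the bipartite case is precisely the case with no bad edges (one colour class on each grid parity). In general a short computation shows that any cycle $C$ carries an odd number of bad edges iff $C$ is an odd cycle, so the bad edges represent the non-bipartiteness of $G$ and cannot be removed by any recolouring. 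The remedy is to route each bad edge's path with a single diagonal step and each good edge with none; the parity of every path then comes out odd automatically, and the remaining slack in path length can be taken up by even-length detours, which preserve parity and are always available in a sufficiently large grid.

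The substantive step, and the one I expect to be the main obstacle, is to collect \emph{all} these diagonals into one row with all of them parallel, as the definition of an almost grid graph demands. One cannot hope to reduce to a single diagonal --- $k$ vertex-disjoint triangles already force $k$ independent parity fixes --- so the construction must genuinely funnel many bad edges through one designated ``diagonal row.'' My plan is to reserve one extra row of the grid and to reroute each bad edge so that its path descends to this row, takes its unique (say, lower-left to upper-right) diagonal at a private column, and returns, keeping every good edge within the remaining rows. Because the underlying drawing is planar and we are free to dilate the grid, these descents can be inserted one at a time, in the left-to-right order in which the bad edges meet the reserved row, so that no two rerouted paths cross.

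Finally I would check that the result is literally an almost grid graph --- a grid graph together with diagonals that are all parallel and all confined to the single reserved row --- and that the two ways of matching each odd-length path still correspond to using or not using the associated original edge, so the matching bijection of the bipartite construction carries over verbatim. The only genuinely delicate point is the crossing-free routing into the single row; everything else is bookkeeping with parities and even-length detours. I would therefore devote the bulk of the argument to the planarity of the rerouting, most cleanly by performing it against a fixed canonical (e.g.\ layered) grid drawing produced by the embedding of \cite{abc06,dkl07}, where the order in which edges reach the reserved row can be read directly off the drawing.
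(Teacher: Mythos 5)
Your parity bookkeeping is sound and your overall plan coincides with the paper's (the paper's own proof is essentially the same two-sentence sketch: elongate the edges causing non-bipartiteness into long paths and place them so that only one row needs diagonals). But the step you yourself flag as delicate contains a genuine gap, and as you have stated it the construction fails. You stipulate that the rerouted bad paths descend to a reserved row while ``keeping every good edge within the remaining rows.'' Under that stipulation, every cycle consisting of good edges lies entirely on one side of the reserved row, so it cannot enclose any point of that row; consequently a bad edge that is enclosed by a cycle of good edges in the drawing (say, a triangle sitting inside a large even cycle) can never get its diagonal into the reserved row: any descent must cross the enclosing cycle, and your only planarity argument --- inserting descents one at a time in left-to-right order --- rules out crossings \emph{among the descents}, not crossings between a descent and the rest of the graph. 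Appealing to a canonical layered drawing from \cite{abc06,dkl07} does not address this, because the obstruction is topological, not an artifact of a particular drawing: a single straight row must simultaneously meet all bad edges, and nothing in your argument guarantees that an embedding with this visibility property exists for the bad set that the colouring happens to hand you.

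The missing idea is that the almost-grid definition only confines the \emph{diagonal} edges to one row; good edges (and the non-diagonal portions of bad paths) are perfectly free to cross that row with vertical unit edges, which does not disturb any parity since vertical steps flip the colour exactly as horizontal ones do. With that freedom the task becomes: choose an embedding, a \emph{valid} bad set $B$ (one meeting every cycle $C$ in $|C| \bmod 2$ edges --- note you get to choose $B$, not just inherit it), and a simple curve $\gamma$ running from the outer face to the outer face that meets every edge of $B$; straightening $\gamma$ into a horizontal grid row turns its crossings with good edges into vertical edges and one crossing of each bad edge into its diagonal (remaining crossings of bad paths also become vertical edges, keeping the diagonal count odd). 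Such a triple always exists: fix a dual spanning tree, let $B$ consist of the primal edges of the unique join inside the tree whose odd-degree dual vertices are exactly the odd faces (valid because face boundaries generate the cycle space), and let $\gamma$ be the contour walk around the tree, a simple curve crossing every tree edge twice and hence meeting every edge of $B$. Without an argument of this kind --- choosing $B$ compatibly with a single separating curve rather than funnelling arbitrarily placed bad edges into a row they cannot reach --- the core step of your proof does not go through.
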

\begin{proof} This procedure is analogous to the previous one except that we
can observe that the edges which are causing non bipartiteness can be 
elongated into a long path and placed in a grid so that only in a single row 
one needs to use a diagonal edge. 
\end{proof}

\section { Bipartite Planar Perfect Matching in \SPL }
In this section, we will give a simple algorithm for
finding a perfect matching in bipartite planar graphs, also improving
over its complexity by putting it in \SPL.
Earlier the best known bound was $\NC^{2}.$ See for example
\cite{mn95,mv00}.
At the core of our algorithm, lies a procedure to deterministically 
assign the small (logarithmic bit long)
weights to the edges of a bipartite planar graph, so that the minimum weight
perfect matching becomes unique. A simple observation about non-vanishing 
circulations in bipartite planar graphs makes it possible to isolate a 
perfect matching in the graph, which can be further extracted out using an 
\SPL\ query. 
\subsection {Non-vanishing Circulations in Grid Graphs}
We are interested in assigning the small weights to the edges of a grid
so that any cycle in it will have non-zero circulation.
This weighting scheme is at the heart of the isolation of perfect matchings
in grid graphs. The procedure runs in Logspace.
\begin{lemma} One can assign, in Logspace, small (logarithmic bit) weights to 
the edges of a grid so that circulation of any cycle becomes non-zero.
(One weighting scheme which guarantees 
non-zero circulation for every cycle in the grid is shown in 
the Figure~\ref{gridSignWeight}.)
\label{wt}
\end{lemma}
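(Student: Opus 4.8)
The plan is to lean on the Block Decomposition lemma (Lemma~\ref{blockdec}) we have just proved, which rewrites the circulation of any simple cycle as a signed sum over the enclosed blocks, $|circ(C)| = |\sum_{B \in interior(C)} sign(B)\,circ(B)|$. This reduces a global statement about cycles to a purely local statement about block circulations. The key observation I would exploit is that the interior of a simple cycle is always a \emph{non-empty} set of blocks (the smallest cycle already bounds a single block). Hence it suffices to arrange the weights so that \emph{every} block $B$ satisfies $sign(B)\,circ(B) > 0$: if all summands carry the same positive sign, the signed sum over any interior is strictly positive, so $circ(C)\neq 0$ for every simple cycle.

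Reducing to this local target, recall that $sign(B_{i,j}) = (-1)^{i+j}$, so the goal becomes to choose edge weights making $circ(B_{i,j})$ share the sign of $(-1)^{i+j}$; the cleanest realization is $circ(B_{i,j}) = (-1)^{i+j}$ exactly, whence $sign(B)\,circ(B)=1$ and $|circ(C)|$ is simply the number of enclosed blocks. First I would set every vertical edge to weight $0$, so that $circ(B_{i,j})$ collapses to $-(H_{i-1,j}+H_{i,j})$, the negated sum of the two horizontal edges bounding the block. I would then fix the bottom-most horizontal edges to $0$ and solve the resulting recurrence $H_{i-1,j}+H_{i,j} = (-1)^{i+j+1}$ upward row by row, which forces $|H_{r,j}| = r$. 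On an $n\times n$ grid these weights have magnitude at most $n$, i.e.\ $O(\log n)$ bits, and each is given by a closed formula in the edge's coordinates, so the entire assignment is computable in Logspace.

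The step I expect to be the main obstacle is precisely this realization: because each horizontal edge is shared by two vertically adjacent blocks, the per-block constraints are globally coupled rather than independent, and one must verify both that a consistent solution exists \emph{and} that the weights stay small. The telescoping structure is what rescues the argument—the coupling is a single linear recurrence in the row index whose solution grows only linearly, keeping the weights within the logarithmic-bit budget. A secondary point worth noting is that the checkerboard sign pattern $(-1)^{i+j}$ cannot be obtained by any row-plus-column separable weighting, since such a scheme yields an additive (rank-one) pattern of the form $f(j)-g(i)$; this confirms that some growth in the weights is genuinely necessary but harmless. Finally, I would remark that the argument yields non-vanishing circulation for every \emph{simple} cycle, which is exactly the form Lemma~\ref{blockdec} supplies and exactly what the subsequent isolation of perfect matchings requires.
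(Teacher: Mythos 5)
Your proposal is correct and follows essentially the same route as the paper: both set all vertical edge weights to $0$, choose horizontal weights so that every block $B$ satisfies $sign(B)\,circ(B)=1$ (circulation exactly $(-1)^{i+j}$, matching the block's sign), and then invoke Lemma~\ref{blockdec} together with the nonemptiness of the interior to conclude $|circ(C)|$ equals the number of enclosed blocks. Your weights $H_{r,j}=(-1)^{r+j+1}r$, obtained by solving the coupled per-block constraints row by row, are merely a different closed-form solution of the same linear system as the paper's explicit assignment $(-1)^{i+j}(i+j+1)$, and are equally of logarithmic bit-size and Logspace-computable.
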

\begin{proof}
We assign all vertical edges weight $0$ and horizontal edge
from grid point $(i,j)$ to $(i+1,j)$ is assigned a weight of
$(-1)^{i+j}(i+j+1)$
as shown in figure ~\ref{gridSignWeight}. Thus the circulation of the block
with diagonally opposite vertices $(i,j)$ and $(i+1,j+1)$ is
$\sum_{e}sign(e)weight(e)$ $= (-1)(-1)^{i+j}(i+j+1) + (+1)0 + (-1)(-1)^{i+j+1}(i+j+2) + (+1)0$
$= (-1)^{i+j}$

Thus, the weighting scheme makes sure that the circulation of any block is either
+$1$ or - $1.$ Moreover, the circulation of a block is positive if and only if
its sign is positive. 
Now, using the Block Decomposition of Circulations (Lemma \ref{blockdec}), 
we have that the circulation of any cycle in absolute value is precisely 
the number of blocks in the interior of it, and hence is never zero.
\end{proof}
\subsection {Non-vanishing Circulations: A Direct Method}
One can think of the procedure of assigning the weights to the edges of 
bipartite planar graph without having to convert it to a grid graph. 
The procedure is as follows:
\begin{itemize}
\item {\bf Step 1:} {\em Make the graph Eulerian (every vertex has an 
even degree):}
add spurious edges to it without disturbing the bipartiteness.
\begin{itemize}
\item {\em Step 1.1:} 
Perform an Euler traversal on a spanning tree in the dual graph. 
\item {\em Step 1.2:}
While performing the traversal, make sure that when you leave
the face, all the vertices in the face, except for the end points of
the edge through which we go to the next face, are of even degree. 
To guarantee this we can do the following: 
\begin{itemize}
\item {\em Step 1.2 a) :} Start with one end point 
say $u$ of the edge $(u,v)$ through which we go to the next face. 
Visit all the vertices of the face in a cyclic ordering, every time connect  
an odd degree vertex to the next vertex by a spurious edge. 
\item {\em Step 1.2 b) :}
If the next vertex is also
of odd degree then go to its next vertex. 
If the next vertex is of even degree
then we have pushed the oddness one step further. 
\item {\em Step 1.2 c) :}
Continue the same procedure
till we remove all the oddness or push it to $v.$ 
\item {\em Step 1.2 d) :}
In the process, the graph might become a multi-graph i.e. between two nodes we 
may have multiple edges, but this can be taken care of by replacing every 
multi-edge by a path of length $3.$ The bipartiteness is preserved in the 
process.
\end{itemize}
\end{itemize}
\item {\bf Step 2:}{\em Fix the signs:} 
After Step 1, the graph has become Eulerian, and hence the dual graph 
is bipartite. 
\begin{itemize}
\item {\em Step 2 a)} {\em Assign alternating signs to adjacent faces:}
Form a bipartition of the dual, and fix all the faces in one bipartition to 
have + sign and the others to have - sign.
Any two adjacent faces will have opposite signs. 
Here, faces will act as blocks. 
\item {\em Step 2 b)} {\em Assign alternating signs to adjacent edges of 
every face:}
Consider an auxiliary graph obtained from the bipartite planar graph as 
follows: Every new vertex corresponds to an edge in the graph.  
Join two new vertices by a new edge 
iff the corresponding edges in the original graph are adjacent along some 
face. Now since both the original graph and its dual are bipartite , 
the auxiliary graph will also be bipartite - hence edges in the two bipartitions get opposite signs ensuring that around every face the signs are alternating.
\end{itemize}
\item {\bf Step 3:} {\em Assign small weights to the edges:} 
Now make another Euler traversal on the dual tree everytime assigning the weight
to the dual tree edge through which you traverse to the next face so that the 
circulation of the face you leave is exactly same as the sign of the face.
All non-tree edges will be assigned zero weight.
It is easy to see that all the weights assigned are small (logarithmic bit).
\end{itemize}

\noindent {\em Block Decomposition of Circulations:}
Again, the circulation of a cycle will decompose into signed sum of 
circulations of the faces in the interior of it and since
the sign and the circulation for any face is the same, we will have 
non-vanishing circulations in the graph. The details are analogous
to the case of a grid. We leave the details to the reader.

\subsection {Deterministic Isolation}
\label{isolation}
The non-vanishing circulations immediately give us the isolation 
for the perfect matchings.
\begin{lemma} If all the cycles in a bipartite graph have
non-zero circulations, then the minimum weight perfect matching in it is unique.
\end {lemma}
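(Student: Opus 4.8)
The plan is to argue by contradiction, exploiting the standard structure theorem for symmetric differences of perfect matchings together with the observation that the circulation of an alternating cycle is exactly the weight discrepancy between the two matchings along that cycle. Suppose $M_1$ and $M_2$ are two \emph{distinct} perfect matchings, both of minimum weight. First I would recall that since both matchings cover every vertex exactly once, every vertex has degree $0$ or $2$ in $M_1 \triangle M_2$; hence $M_1 \triangle M_2$ is a vertex-disjoint union of simple cycles, and it is nonempty because $M_1 \neq M_2$. Moreover each such cycle alternates between edges of $M_1$ and edges of $M_2$, and as the graph is bipartite every such cycle is even, so the alternation is consistent all the way around. Fix one such cycle $C$, on which the circulation is well defined.

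The heart of the argument is the identity $|circ(C)| = |a - b|$, where $a$ is the total weight of the $M_1$-edges on $C$ and $b$ the total weight of the $M_2$-edges on $C$. Walking around $C$ the edges alternate between $M_1$ and $M_2$, while the circulation applies the alternating sign $(-1)^i$ to successive edges; these two alternations are therefore in phase, so all $M_1$-edges receive one sign and all $M_2$-edges the opposite sign, giving $circ(C) = \pm(a-b)$. Now toggle along $C$: the matching $M_1 \triangle C$ replaces the $M_1$-edges of $C$ by its $M_2$-edges, so $weight(M_1 \triangle C) = weight(M_1) - a + b$, while symmetrically $weight(M_2 \triangle C) = weight(M_2) + a - b$, and both are again perfect matchings. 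Minimality of $M_1$ forces $b - a \ge 0$ and minimality of $M_2$ forces $a - b \ge 0$, so $a = b$ and hence $circ(C) = 0$. This contradicts the hypothesis that every cycle in the graph has nonzero circulation, and the contradiction establishes uniqueness of the minimum weight perfect matching.

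The step I expect to require the most care is the synchronization claim $circ(C) = \pm(a-b)$. One must check that the geometric sign convention defining $circ$ (the $+/-$ assignment to vertical and horizontal edges combined with the block signs, or equivalently the $(-1)^i$ along the traversal) is genuinely in phase with the combinatorial alternation between $M_1$ and $M_2$. Bipartiteness guarantees $C$ has even length, so the matching-alternation closes up consistently; the only freedom is the choice of starting edge $e_0$ and of which matching is counted first, and this affects at most an overall factor of $\pm 1$. Since the conclusion only needs $circ(C) \neq 0$ rather than a specific sign, this ambiguity is harmless — but it is the one point that should be pinned down explicitly so that the passage from ``alternating cycle'' to ``circulation'' is airtight.
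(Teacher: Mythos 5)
Your proposal is correct and follows essentially the same route as the paper: take an alternating cycle in the symmetric difference of two minimum-weight perfect matchings, observe that its nonzero circulation equals $\pm(a-b)$ for the two matchings' shares of the cycle, and swap along the lighter side to contradict minimality. Your write-up is in fact more careful than the paper's, since you explicitly verify the synchronization between the circulation's alternating signs and the $M_1$/$M_2$ alternation (using evenness of the cycle), a point the paper passes over silently.
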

\begin{proof} If not, then we have two minimum weight perfect 
matchings $M_1$ and
$M_2$ which will contain some
alternating cycles, and each such cycle is of even length. Consider any one
such cycle. Since the circulation of an even cycle is nonzero
either the part of it 
which is in $M_1$ is lighter or the part of it which is in $M_2$ is lighter, 
in either case, we can form another perfect matching which is lighter than the 
minimum weight perfect matching, which is a contradiction.
\end{proof}
 
Thus we have a deterministic way of isolating a perfect matching
in bipartite planar graphs, and it is easy to check that the 
procedure of assigning the weights to the edges works in
deterministic Logspace.

\subsection {Extracting the Unique Matching}
Once we have isolated a perfect matching, one can extract it out easily
as follows:
\begin {itemize}
\item {\bf Step 1:} 
Construct an $n \times n$  matrix $M$ associated with a planar 
graph on $n$ vertices as follows: 
Find a {\em Pfaffian orientation} (\cite{k67})
of the planar graph and put the
$(i,j)$th entry of the matrix $M$
to be $x^{w_{(i,j)}}$ where $x$ is a variable and $w_{(i,j)}$ is the weight
of the edge $(i,j)$ which is directed from $i$ to $j$ in the Pfaffian 
orientation. If the edge is directed from $j$ to $i$ then
put $- x^{w_{(i,j)}}$ as $(i,j)th$ entry of the matrix.
\item {\bf Step 2:} 
If $t$ is the weight of the minimum weight perfect matching, then
the coefficient of $x^{2t}$ in determinant of $M$ will be the 
number of perfect matchings of weight $t.$ Now, as shown in 
\cite{mv97,v99} this coefficient can be written as a determinant of 
another matrix.
\item {\bf Step 3:} 
Now start querying from $i = - n^2$ to $+ n^2$ whether 
the coefficient of $x^{2i}$ is zero or not and
the first time you find that it is nonzero; stop. The first nonzero value
will occur when $i = t$ and it will be $1$ since the minimum weight 
perfect matching is unique.
Hence, during the process, every time we have a promise that if
the determinant is non-zero, it is $1.$ This procedure gives the
weight of the minimum weight perfect matching.
\item {\bf Step 4:}
Now once we know the procedure to find the weight of the minimum weight 
perfect matching, then one can find out which edges are in the matching by 
simply deleting the edge and again finding  the weight of the minimum weight 
perfect matching in the remaining graph. If the edge is in the the isolated
minimum weight perfect matching then after its deletion the weight of the new
minimum weight perfect matching will increase. Otherwise we can conclude that 
the edge is not in the isolated minimum weight perfect matching.  
Note that the isolation holds even after deleting an edge from the graph.
\end{itemize} 
\begin{theorem}\label{thm:main} Bipartite Planar Perfect Matching is in \SPL.
\end {theorem}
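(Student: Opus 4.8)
The plan is to assemble the ingredients already developed into a single \SPL\ procedure. Given a bipartite planar graph $G$ on $n$ vertices, I would first reduce it to a grid graph (or equivalently run the direct method described earlier) together with the Logspace weighting scheme of Lemma~\ref{wt}, so that every cycle receives a non-zero circulation. By the isolation lemma of Section~\ref{isolation}, the resulting edge weights --- each only $O(\log n)$ bits long --- make the minimum weight perfect matching unique. All of this preprocessing is deterministic Logspace, hence harmless for an \SPL\ bound, so the real content is to show that, once the matching is isolated, deciding existence and recovering the edges reduce to singularity tests under the promise that the determinant is $0$ or $1$.

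Following the extraction procedure, I would fix a Pfaffian orientation of the planar graph and form the matrix $M$ whose $(i,j)$ entry is $\pm x^{w_{(i,j)}}$. Because the orientation is Pfaffian, $\det(M) = \sum_k a_k x^{2k}$, where $a_k$ equals, up to a fixed global sign that we may normalize away, the number of perfect matchings of weight $k$ (all matchings contribute with the same sign). Since each edge weight is $O(\log n)$ bits and a matching has $O(n)$ edges, every exponent lies in an interval of width $O(n^2)$, so $\det(M)$ is a Laurent polynomial of polynomially bounded degree; by \cite{mv97,v99} each coefficient $a_k$ can be written as the determinant of an integer matrix $N_k$ that is constructible in Logspace from $k$.

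The delicate point --- and the step I expect to require the most care --- is keeping the \emph{whole} computation inside \SPL\ rather than merely reducing it to a list of determinant evaluations. The promise $a_k \in \{0,1\}$ holds only for $k \le t_{\min}$: for larger $k$ there can be many matchings of a given weight, so $a_k$ may exceed $1$. The fix is to scan $k$ in increasing order and halt at the first $k$ with $a_k \neq 0$; by uniqueness this index is $t_{\min}$ and there $a_k = 1$, while every smaller coefficient is $0$. Thus every coefficient we actually inspect satisfies the $0/1$ promise, and each inspection (is $N_k$ singular?) is a legitimate \SPL\ query. What remains is to verify that ``find the least $k$ for which $N_k$ is nonsingular'' composes into a single \SPL\ computation; here I would invoke the robustness of \SPL\ (closure under complement and under the Logspace/\NCo-Turing reductions that preserve the $0/1$ promise) exactly as exploited by \cite{arz99}, so that the promise is never violated along the accepting branch.

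Finally, for the search version I would observe that deleting any edge $e$ leaves the circulation condition --- and hence the uniqueness of the minimum weight matching --- intact, so the weight of the minimum weight matching of $G - e$ is again computable by the same scan. Comparing it with $t_{\min}$ decides whether $e$ belongs to the isolated matching (the weight strictly increases iff $e$ is used), and running these tests over all edges stays within \SPL\ by the same closure argument. Combining the decision scan with this edge-by-edge extraction establishes that Bipartite Planar Perfect Matching is in \SPL.
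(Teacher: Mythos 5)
Your proposal is correct and takes essentially the same route as the paper: the Logspace weighting of Lemma~\ref{wt} for isolation, the Pfaffian-orientation matrix with monomial entries whose coefficients are determinants via \cite{mv97,v99}, an increasing scan of coefficients so that the $0/1$ promise holds at every query actually made, and edge-deletion tests for extraction, all composed via $L^{\SPL} = \SPL$. If anything, you make explicit the promise-preservation and closure details that the paper's one-line proof leaves implicit.
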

\begin{proof} The Logspace procedure in Lemma \ref{wt} assigns the 
small weights to the edges of the graph
so that the minimum weight perfect matching is unique and the above 
procedure extracts it out in $L^{\SPL} = \SPL.$
\end{proof}
We obtain the following corollaries.
\begin{corollary}\label{cor1} \UPM\ in bipartite planar graphs is in \SPL.
\end{corollary}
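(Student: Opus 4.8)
The plan is to reduce the uniqueness question to two invocations of the min-weight extraction machinery already built for Theorem~\ref{thm:main}, together with a single numerical comparison. The guiding observation is that, under a weighting with non-vanishing circulations, not only is the minimum weight perfect matching unique, but so is the maximum weight one; and a bipartite planar graph has exactly one perfect matching precisely when these two coincide.

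Concretely, first I would assign the logarithmic-bit weights of Lemma~\ref{wt} so that every cycle has non-zero circulation, and run the extraction procedure from the proof of Theorem~\ref{thm:main} to decide whether a perfect matching exists and, if so, to compute the weight $t_{\min}$ of the (unique) minimum weight perfect matching; by that theorem this is an \SPL\ computation. Next I would negate every edge weight. Since the circulation of a cycle is a signed sum of edge weights, negation simply flips the sign of every circulation and therefore preserves the property that all circulations are non-zero; hence the isolation argument still applies to the negated instance and its minimum weight perfect matching is again unique. Running the same extraction procedure on the negated weights yields the weight of the unique maximum weight perfect matching of the original graph, call it $t_{\max}$, again in \SPL.

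The key step, and the one carrying the real content, is the equivalence: assuming $G$ has a perfect matching, $G$ has exactly one perfect matching if and only if $t_{\min}=t_{\max}$. If $G$ has a single perfect matching the equality is immediate. Conversely, suppose $G$ had two distinct perfect matchings; by isolation there is a unique matching of weight $t_{\min}$ and a unique matching of weight $t_{\max}$, so the two matchings cannot both have weight $t_{\min}$, forcing some matching of weight strictly above $t_{\min}$ and hence $t_{\min}<t_{\max}$. The contrapositive gives $t_{\min}=t_{\max}\Rightarrow$ uniqueness. Finally, since the existence test and both weight computations are \SPL\ computations, and comparing the two resulting integers (each of absolute value polynomially bounded) is trivial, the whole procedure runs in $\Log^{\SPL}=\SPL$. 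The main obstacle to watch is the justification that negation preserves the isolation property---this is exactly where the signed, rather than positive, nature of the circulation is used---together with the uniqueness-of-extremes argument underlying the equivalence; both are elementary but are precisely the places where the planar isolation structure is invoked.
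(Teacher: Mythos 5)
Your proof is correct, but it takes a genuinely different route from the paper. The paper's proof constructs the unique minimum weight perfect matching $M$ explicitly (via Theorem~\ref{thm:main}) and then certifies uniqueness by edge deletion: for each edge $e\in M$ it asks whether $G-e$ still has a perfect matching, which is a valid \SPL\ decision because, as noted in the paper, the isolation property survives edge deletion; a second perfect matching exists iff some such query answers yes, since any matching $M'\neq M$ must avoid some edge of $M$. You instead never construct the matching at all: you run only the weight-finding scan (Steps 1--3 of the extraction procedure) twice, once on the isolating weights and once on their negation, and compare $t_{\min}$ with $t_{\max}$. The load-bearing observation on your side --- that negating all weights multiplies every circulation (a signed sum) by $-1$ and hence preserves non-vanishing, so the maximum weight matching is also isolated and the scan on the negated instance satisfies the \SPL\ promise --- is correct and is exactly what makes the $t_{\max}$ query legitimate; without it, probing coefficients above the minimum would violate the promise, since non-extreme weight classes may contain many matchings. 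Your uniqueness criterion ($t_{\min}=t_{\max}$ given existence) is also sound: uniqueness of the min-weight matching forces any second matching to weigh strictly more than $t_{\min}$. What each approach buys: yours avoids the edge-by-edge extraction of Step~4 entirely and uses $O(n^2)$ coefficient queries independent of the matching found, while the paper's reuses the construction of Theorem~\ref{thm:main} as a black box and needs no argument about negated weights; both land in $\Log^{\SPL}=\SPL$. One small implementation point you should make explicit: with negative exponents the matrix entries are Laurent monomials, so one should shift (e.g.\ multiply through by a suitable power of $x$, or add a constant to every edge weight, which leaves all circulations of even cycles unchanged since the signs alternate) --- though the paper's own weights are already signed, so this is a shared, easily fixed detail rather than a gap.
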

\begin{proof} To check whether the graph has unique perfect matching, one 
can construct one perfect matching and check that
after removing any edge of it there is no other perfect matching.
\end{proof}
\begin{corollary}\label{cor2} Minimum weight perfect matching in planar graphs 
with polynomially bounded weights is computable in \SPL.
\end{corollary}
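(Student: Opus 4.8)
The plan is to reduce the weighted problem to the unique-matching extraction already developed for Theorem~\ref{thm:main}, by super-imposing the given weights onto the non-vanishing circulation weights of Lemma~\ref{wt}. Let $w(e)$ be the given integer weights, bounded by $|w(e)|\le P(n)$ for a fixed polynomial $P$, and let $w_0(e)$ denote the isolating weights of Lemma~\ref{wt} (or those produced by the direct construction of non-vanishing circulations described earlier, when the input is presented as a bipartite planar graph rather than a grid). First I would pass to the combined weighting
\[
W(e)=K\,w(e)+w_0(e),
\]
with $K$ a fixed polynomial, say $K=n^{4}$, chosen large enough to exceed both the largest $|circ(C)|$ that the $w_0$-weights alone can produce (at most the number of interior blocks, i.e.\ $n^2$) and twice the largest total $w_0$-weight of a perfect matching (a sum of $O(n^2)$ logarithmic-bit weights, hence polynomial).

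The step I expect to matter most is checking that this super-position still gives every cycle a non-zero circulation. Since circulation is an alternating signed sum of edge weights, it is linear in the weights, so $circ_W(C)=K\,circ_w(C)+circ_{w_0}(C)$ for every cycle $C$. If $circ_w(C)=0$, then $circ_W(C)=circ_{w_0}(C)\ne 0$ directly by Lemma~\ref{wt}. Otherwise $circ_w(C)$ is a non-zero integer, whence $|K\,circ_w(C)|\ge K>n^2\ge|circ_{w_0}(C)|$, so the $K$-term cannot be cancelled and again $circ_W(C)\ne 0$. With all circulations non-vanishing, the isolation lemma of Section~\ref{isolation} applies verbatim, and the minimum-$W$-weight perfect matching is unique.

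It then remains only to read off the minimum $w$-weight. Because $K$ was chosen to exceed twice the largest total $w_0$-weight of any matching, the unique minimum-$W$-weight matching is automatically a minimum-$w$-weight matching, ties among $w$-optimal matchings being broken by $w_0$. I would therefore feed $W$ into the extraction procedure that proves Theorem~\ref{thm:main}---its Pfaffian-orientation construction is valid for any planar graph, and the promise that each queried coefficient is $0$ or $1$ is exactly what uniqueness supplies---recover the isolated matching edge by edge in $L^{\SPL}=\SPL$, and finally sum $w$ over its edges. Since $w$ and $K$ are polynomially bounded, so is $W$; hence the associated determinant has polynomial degree in $x$ and the coefficient queries range over a polynomial interval, keeping the entire computation inside \SPL.

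The only genuine obstacle is reconciling the two competing demands on $K$: it must be strictly larger than the maximal isolating circulation so that the super-position never manufactures a vanishing circulation, yet still polynomial so that $W$, and with it the degree of the generating-function determinant, stays polynomially bounded. The choice $K=n^{4}$ meets both, and once it is fixed, uniqueness, the $0/1$ promise, and the edge-by-edge extraction all follow with no further work.
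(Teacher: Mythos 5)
Your proposal is correct and follows essentially the same route as the paper: scale the given polynomially bounded weights by a large polynomial factor (the paper also picks $n^4$), perturb by the isolating circulation weights, and feed the combined weighting into the \SPL\ extraction of Theorem~\ref{thm:main}. The paper states this in two sentences without verification; your linearity-of-circulation argument and the dominance estimates for $K$ are exactly the details it leaves implicit.
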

\begin{proof} One can first scale the polynomially bounded weights
by some large multiplicative factor, say $n^4$ and then perturb them 
using the weighting scheme described above so that some minimum weight
matching with original weights remains minimum weight matching with new
weights and is unique. Then extraction can be done in \SPL.
\end{proof}
\begin{corollary} Minimum weight spanning tree in planar graphs is 
computable in \SPL\ if the weights are polynomially bounded.
(The same is true for directed rooted spanning trees (arborescences) in planar
graphs due to Generalized Temperley's bijection shown in
\cite{kpw00}.)
\end{corollary}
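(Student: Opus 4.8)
The plan is to reduce the minimum weight spanning tree problem to the minimum weight perfect matching problem already settled in Corollary~\ref{cor2}, using Temperley's Bijection as the bridge. Given a planar graph $G$ with polynomially bounded integer edge weights, I would first invoke Temperley's Bijection to produce an associated bipartite planar graph $H$ whose perfect matchings are in one-to-one, weight-preserving correspondence with the spanning trees of $G$. Because the correspondence is weight preserving, the weight of the perfect matching of $H$ associated to a spanning tree $T$ equals the weight of $T$ (possibly up to a fixed offset independent of $T$), so a minimum weight spanning tree of $G$ maps precisely to a minimum weight perfect matching of $H$, and the induced weights on $H$ remain polynomially bounded.

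Next I would observe that the construction of $H$ from $G$ is entirely local — one introduces a vertex for each vertex and each bounded face of $G$ and an edge for each edge of $G$ — and that transferring the weights along this correspondence is a purely syntactic relabelling. Hence the reduction $G \mapsto H$ is computable in deterministic Logspace. Applying Corollary~\ref{cor2} to the planar graph $H$ with its polynomially bounded weights, its minimum weight perfect matching can be extracted in \SPL. Reading back through the bijection which edges of $H$ lie in this matching then tells us exactly which edges of $G$ constitute a minimum weight spanning tree. Since a Logspace reduction composed with an \SPL\ computation stays inside $L^{\SPL} = \SPL$, the whole procedure is in \SPL.

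For the parenthetical claim about directed rooted spanning trees, the only change is to replace Temperley's Bijection by the Generalized Temperley Bijection of \cite{kpw00}, which supplies a weight-preserving one-to-one correspondence between arborescences of a directed planar graph and perfect matchings of an associated bipartite planar graph. The identical reduction then yields an \SPL\ algorithm for minimum weight arborescences.

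The main obstacle I anticipate is verifying carefully that the bijection is weight preserving in exactly the sense needed for minimization: one must confirm that the map sends the weight of a spanning tree (or arborescence) to the weight of the corresponding matching up to an additive constant that does not depend on the particular tree, so that the minimizers genuinely coincide, and that this transfer — together with the combinatorial construction of $H$ — fits within the Logspace budget. Once this is in place, the remainder is a direct appeal to Corollary~\ref{cor2}.
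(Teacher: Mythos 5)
Your proposal matches the paper's proof, which is simply the one-line observation that the corollary ``follows from Temperley's bijection'' (the weight-preserving correspondence stated as a lemma in the preliminaries) combined with Corollary~\ref{cor2}; you have merely filled in the routine details (Logspace computability of the reduction and closure under $L^{\SPL} = \SPL$) that the paper leaves implicit. The weight-preservation concern you flag is already guaranteed by the paper's statement of the bijection, so no gap remains.
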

\begin{proof} This follows from Temperley's bijection.
\end{proof}

Restricting the family of planar graphs, yields better upper bounds for Matching
questions. Notably, we prove that:
\begin{corollary}\label{cor4}(of Theorem~\ref{thm:main}) Perfect Matching in outerplanar graphs is in \SPL.
\label{outerplanar}
\end{corollary}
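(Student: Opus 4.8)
The plan is to derive the outerplanar case as a reduction to the bipartite planar case of Theorem~\ref{thm:main}, using the special structure of outerplanar graphs to neutralise the odd cycles that obstruct the circulation-based isolation. First I would preprocess the input: perfect matchings decompose over the blocks of the block--cut tree, a pendant (degree-one) vertex forces its incident edge and can be peeled off together with its neighbour, and an isolated vertex rules out a perfect matching. All of this is Logspace-computable, so it suffices to isolate and extract a perfect matching in each $2$-connected outerplanar component. Such a component is a simple cycle (its outer boundary, carrying all the vertices) together with a family of pairwise non-crossing chords, and --- crucially --- its weak dual is a \emph{tree}.

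Next I would set up the deterministic weighting. Recall that the isolation argument of Section~\ref{isolation} needs only that every \emph{even} cycle receive a non-zero circulation, since the symmetric difference of two perfect matchings is a disjoint union of even alternating cycles (so the ``bipartite'' hypothesis there can be replaced by the existence of a circulation scheme that is non-zero on all even cycles). The circulation scheme in turn requires a sign function that alternates along the edges bounding every face, which is exactly possible when every bounded face is even, i.e. in the bipartite case. The non-crossing-chord/tree structure is what lets me recover this for outerplanar graphs: because the weak dual is a tree, I can run the Euler-traversal weighting of the direct method (Steps 1--3) along the dual tree, processing the bounded faces from the leaves inward, and push each odd face's unavoidable sign defect outward along the tree toward the unbounded face, where it does not enter the block decomposition of any bounded even cycle.

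With such a weighting in hand, the block decomposition of circulations (Lemma~\ref{blockdec}, in its planar form) shows that every bounded even cycle has circulation whose absolute value equals the number of faces it encloses, hence is non-zero; the isolation lemma then makes the minimum-weight perfect matching unique. I would finish by extracting it exactly as in the extraction procedure described above (Steps 1--4): build the matrix from a Pfaffian orientation, which exists for every planar and hence every outerplanar graph, read off the weight of the unique minimum-weight matching as the first non-zero coefficient, and recover its edges by the delete-and-recompute queries, all of which live in $\Log^{\SPL}=\SPL$.

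The hard part is precisely the sign-consistency step for the odd faces. In a general planar graph the bounded faces sit in a $2$-connected dual whose cyclic structure can force the sign defects of several odd faces to collide on a common edge, and that collision can make an even alternating cycle's circulation vanish; this is exactly why generically converting to an almost grid graph (Lemma~\ref{almost}) does not settle the non-bipartite planar case, which remains open. What I am relying on for outerplanar graphs is that the weak dual is a tree, so there is a conflict-free way to route every odd face's defect out to the unbounded face. Verifying that this routing genuinely leaves all bounded even cycles with non-zero circulation --- equivalently, that it realises a perfect-matching-preserving reduction to a bipartite planar instance --- is the step that needs the most care.
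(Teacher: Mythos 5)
Your proposal has a genuine gap at its core, and you flag it yourself: the entire argument rests on the unproved claim that, because the weak dual of a $2$-connected outerplanar graph is a tree, one can route each odd face's ``sign defect'' out to the unbounded face so that every bounded even cycle retains non-zero circulation. This is not a routine verification. The obstruction is that a single, fixed sign/weight assignment must work simultaneously for \emph{all} even cycles, and an even cycle in an outerplanar graph can enclose odd faces --- indeed it always encloses an even number of them, since the parity of a cycle's length equals the parity of the sum of the lengths of the enclosed face boundaries. So for the face decomposition of circulations (the analogue of Lemma~\ref{blockdec}) to go through, the defects of the enclosed odd faces would have to cancel in pairs \emph{inside the enclosed subtree}, and which odd faces must pair up depends on which cycle one considers; a fixed routing toward the outer face has not been shown to be consistent across the (possibly exponentially many) cycles, each of which corresponds to a different subtree of the dual tree. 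Until that lemma is established, neither the isolation step nor the claimed ``perfect-matching-preserving reduction to a bipartite planar instance'' exists. The block--cut-tree preprocessing is harmless but does not touch this difficulty.

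The paper avoids all of this with a two-line reduction that your proposal misses: label the vertices $1,\dots,n$ along the spine (the outer face). An edge between two vertices of the same parity has an odd number of vertices strictly between its endpoints, and since chords of an outerplanar graph do not cross, a perfect matching containing that edge would have to match this odd set of vertices within itself --- impossible. Hence every same-parity edge lies in no perfect matching and can be deleted without changing the set of perfect matchings; what remains is bipartite (via the odd/even parity bipartition) and still planar, so Theorem~\ref{thm:main} applies directly. No new weighting scheme, no dual-tree routing, and no treatment of odd faces is needed: the non-bipartiteness is eliminated at the level of the graph rather than repaired at the level of the circulation.
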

\begin{proof} If we have an outerplanar(1-page) graph on $n$ vertices 
with vertices labelled from $1$ to $n$ along the spine, then 
observe that the edges between two odd labelled vertices can not be
part of any perfect matching. This is because the number of vertices below 
that edge is odd. Similarly edges between two even labelled vertices 
can not participate in any perfect matching. Hence, by removing such edges
we can make the graph bipartite and then we can apply the previous theorem.
\end{proof}

We use the lemma below in order to prove the theorem that follows it.
\begin{lemma}\label{lemma:parOP}
The parity of the number of perfect matchings in an outerplanar graph can be 
computed in Logspace.
\end{lemma}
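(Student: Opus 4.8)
The plan is to first reduce the parity question to a clean algebraic quantity and then to exploit the tree-like structure of outerplanar graphs to bring the complexity down to \Log. For any graph $G$ on $n$ vertices with adjacency matrix $A$, the number of perfect matchings satisfies $\#\mathrm{pm}(G) \equiv \det(A) \pmod 2$: setting every variable to $1$ in the Tutte matrix $T$ (whose entries differ from those of $A$ only by signs, which vanish since $-1=1$ over $\mathbb{F}_2$) gives $\det(T)|_{x=1} = (\mathrm{Pf}\,T|_{x=1})^2 \equiv (\#\mathrm{pm}(G))^2 \equiv \#\mathrm{pm}(G) \pmod 2$, while $T \equiv A \pmod 2$. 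This already places the parity in $\pt\Log$, but our target is the smaller class \Log, so the whole point of the proof must be to use outerplanarity.

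Next I would set up a combinatorial recurrence tailored to the non-crossing (one-page) drawing. Label the vertices $1,\dots,n$ along the spine; since the arcs do not cross, any edge $(i,k)$ used to match $i$ splits the interval $[i,j]$ into the \emph{independent} sub-intervals $[i+1,k-1]$ and $[k+1,j]$, because an edge joining a vertex of one part to a vertex of the other would have to cross $(i,k)$. Writing $M[i,j]$ for the number of perfect matchings of the induced subgraph on $\{i,\dots,j\}$, this yields, over $\mathbb{F}_2$,
\[
M[i,j] \;\equiv\; \sum_{k:\,(i,k)\in E,\ i<k\le j} M[i+1,k-1]\cdot M[k+1,j] \pmod 2,
\]
with base values $1$ on the empty interval and $0$ on a single vertex. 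A straightforward table over the $O(n^2)$ intervals computes $M[1,n] \bmod 2$ in polynomial time.

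The main obstacle is turning this into a \Log\ algorithm: we can afford neither the $\Theta(n^2)$-bit table nor the $\Theta(n)$-deep recursion stack that a naive evaluation would require. To overcome this I would exploit the fact that outerplanar graphs have treewidth at most $2$. From the spine order (which for an outerplanar graph is available and logspace-computable) one extracts a rooted tree decomposition with bags of size at most three. The recurrence above is exactly a bottom-up tree computation: at each bag the relevant datum is the parity, for each of the $O(1)$ matching patterns of the (at most two) boundary vertices, of the number of completions inside the subtree, and the value at a node is obtained from the values of its children by a fixed, finite-state operation determined only by the local bag. Thus the computation is the evaluation of a formula over a fixed finite algebra, which is in \NCo, and since $\NCo \subseteq \Log$ the parity is computable in \Log.

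The step I expect to be delicate is precisely this reorganization of the interval recurrence as a genuine \emph{tree} (formula) evaluation rather than as a shared DAG: a priori the sub-problems $M[i+1,k-1]$ and $M[k+1,j]$ are reused across different top-level choices of $k$, creating sharing that a formula cannot have. The resolution is that the width-$2$ tree decomposition replaces the interval index set by a bounded-interface tree in which each subtree is summarized by a constant amount of data, so that neither an unbounded stack nor a large table is needed. What then remains to verify is only that the local combining rule faithfully implements the displayed recurrence modulo $2$ and that the decomposition itself is produced within logarithmic space.
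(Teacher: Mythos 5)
Your opening move (the parity of the number of perfect matchings equals $\det(A) \bmod 2$) coincides with the paper's, and your interval recurrence for one-page graphs is correct. The genuine gap is in your final step, which is exactly where the whole difficulty of the lemma lives: you assert that the bottom-up computation over a width-$2$ tree decomposition ``is the evaluation of a formula over a fixed finite algebra, which is in \NCo, and since $\NCo \subseteq \Log$'' we are done. Buss's theorem that formula evaluation is in \NCo\ applies to formulas given as \emph{strings} (parenthesized encodings); what your construction produces is a tree given as a \emph{pointer structure} whose depth can be $\Theta(n)$. A straightforward traversal of such a tree must remember one partial constant-size value per ancestor on the current root--leaf path, i.e.\ $\Theta(n)$ bits, and the obvious refinement (heavy-path decomposition, composing the constant-size transfer functions along heavy paths and recursing on light children) still needs $\Theta(\log^2 n)$ space, since each of the $\Theta(\log n)$ light recursions stores a node pointer. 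Converting the pointer tree into a balanced string-encoded formula in logarithmic space is itself the crux, and you give no procedure for it; indeed, carrying out precisely your plan (logspace evaluation of bounded-treewidth dynamic programs, i.e.\ a logspace Courcelle theorem) required the later and quite heavy machinery of Elberfeld, Jakoby and Tantau (FOCS 2010), which postdates this paper and is certainly not ``a fixed, finite-state operation'' away. As written, your argument yields a polynomial-time, or with care an $\NCt$-type, bound --- not \Log. (A secondary, smaller gap: the logspace computability of the spine order and of the tree decomposition is asserted rather than proved, though this is plausible given logspace planar embedding.)

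For contrast, the paper sidesteps tree-structured dynamic programming entirely: from $\det(A) \bmod 2$ it passes to an auxiliary graph obtained by adding one new vertex adjacent to all odd-degree vertices --- outerplanarity is used only to guarantee this apex addition preserves planarity. Modulo $2$ the new graph's adjacency matrix equals its Laplacian, so by the Matrix--Tree theorem the minor obtained by deleting the new vertex's row and column has determinant equal to the number of spanning trees mod $2$; that parity is computable in \Log\ for planar graphs by \cite{bkr07}. If you wish to salvage your route, you must either supply a logspace tree-evaluation procedure for pointer-represented trees over a finite algebra or find special structure in the interval recurrence (e.g.\ an algebraic reformulation along the spine) that admits iterated product of constant-size matrices over $\mathbb{F}_2$ in a \emph{given} linear order, which would indeed be in \Log; as proposed, that step is missing.
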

\begin{proof}
 It is not hard to observe that the parity of the determinant of the adjacency matrix of a graph
is the same as the parity of number of perfect matchings in it. 
Finding the parity of the adjacency matrix of an outerplanar graph can be 
reduced to finding the parity of the number of spanning trees in an auxiliary
planar graph which is constructed by adding a new vertex and connecting it to
all the odd degree vertices of the original graph.
The new graph has all the vertices of even degree. Hence the adjacency matrix 
of the new graph is the same as its Laplacian modulo $2.$ Now the minor obtained
by removing the row and the column corresponding to the new vertex, 
is precisely the adjacency matrix of the original outerplanar graph modulo $2.$
Also the determinant(mod $2$) of this minor is precisely the parity of the 
spanning trees in new graph. 
As shown in \cite{bkr07}, the  parity of spanning trees modulo $2$
in planar graphs can be computed in Logspace.
Hence, the parity of the determinant of the adjacency matrix of an outerplanar graph 
can be obtained in Logspace which in turn gives the parity of the number of
perfect matchings in it.
\end{proof}

\begin{theorem}\label{thm:op} \UPM\ in outerplanar graphs is in Logspace.
\end{theorem}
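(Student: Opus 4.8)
The plan is to reduce the uniqueness question to a single structural test that can be answered using the parity subroutine of Lemma~\ref{lemma:parOP}. For an edge $e=(u,v)$, the perfect matchings of $G$ containing $e$ are in bijection with the perfect matchings of the induced subgraph $G-\{u,v\}$, so the number of perfect matchings of $G$ through $e$ has the same parity as $\#\mathrm{PM}(G-\{u,v\})$. Since every induced subgraph of an outerplanar graph is again outerplanar, Lemma~\ref{lemma:parOP} computes this parity bit in {\Log}. I therefore define the candidate edge set $M' = \{\,e=(u,v)\in E : \#\mathrm{PM}(G-\{u,v\})\text{ is odd}\,\}$, and the algorithm simply computes $M'$ and accepts iff $M'$ is a perfect matching of $G$.

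Correctness rests on the combinatorial claim that $M'$ is a perfect matching if and only if $G$ has a unique perfect matching. The easy direction is immediate: if $M$ is the unique perfect matching then an edge $e$ lies on exactly one matching (namely $M$) when $e\in M$ and on none when $e\notin M$, so the odd-parity edges are exactly $M$. For the converse I would argue contrapositively, assuming $G$ has at least two perfect matchings and showing $M'$ cannot be a perfect matching. Writing $c_e$ for the number of perfect matchings through $e$ and $N$ for the total number of perfect matchings, the identity $\sum_{e\ni v} c_e = N$ shows that the number of odd-parity edges at each vertex $v$ has the same parity as $N$. When $N$ is even, every vertex is incident to an even number of odd-parity edges, so no vertex has degree exactly one in $M'$; hence $M'$ is not a perfect matching, which disposes of the even case cleanly.

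The hard part is the case $N\ge 3$ odd, where every vertex meets an odd number of odd-parity edges and I must rule out the degenerate possibility that this number is exactly one everywhere. Here I would exploit alternating cycles: assuming $M'$ were a perfect matching $P$, every edge of $P$ has odd $c_e$ while every other edge has even $c_e$, and since $N\ge 2$ there is a $P$-alternating cycle $Z$ whose flip $P\oplus Z$ is another perfect matching. The plan is to combine the involution on perfect matchings given by flipping $Z$ with the fact that the $\mathbb{F}_2$-sum of the indicator vectors of all perfect matchings equals the indicator of $P$, in order to force a contradiction; the $N=3$ subcase already follows, since the sum of the two perfect matchings other than $P$ would then have to vanish, making them equal. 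I expect the heart of the difficulty to be pushing this cancellation through for all odd $N$, and this is where the tree-like (treewidth two) structure of outerplanar graphs, via a minimal or innermost alternating cycle, should be brought in to complete an induction.

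Finally, the complexity bound is routine once the characterization is established. Each parity query is a single {\Log} computation by Lemma~\ref{lemma:parOP}, the edges can be examined one at a time while reusing space, and verifying that the odd-parity edges form a perfect matching (each vertex incident to exactly one of them) is a {\Log} post-processing step. Since {\Log} is closed under composition, the whole procedure runs in {\Log}, placing \UPM\ for outerplanar graphs in {\Log}.
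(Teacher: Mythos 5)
Your construction step is fine, and your even-$N$ argument is correct, but the core characterization your algorithm rests on --- ``the odd-parity edges $M'$ form a perfect matching iff $G$ has a unique perfect matching'' --- is actually \emph{false}, even for bipartite outerplanar graphs, and precisely in the odd-$N\ge 5$ regime you flagged as the hard part. Take two hexagons $v_1v_2\cdots v_6$ and $u_1u_2\cdots u_6$ together with the two rungs $v_1u_1$ and $v_2u_2$. This graph is bipartite and outerplanar: all twelve vertices lie on the Hamiltonian cycle $v_3v_4v_5v_6v_1u_1u_6u_5u_4u_3u_2v_2$, and the two remaining edges $v_1v_2$, $u_1u_2$ are nested non-crossing chords. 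A perfect matching uses either both rungs or neither (one rung leaves an odd number of vertices in a hexagon), so there are exactly $N=5$ perfect matchings: the four combinations of the two matchings of each hexagon, plus $P=\{v_1u_1,\,v_2u_2,\,v_3v_4,\,v_5v_6,\,u_3u_4,\,u_5u_6\}$. Counting, $c_e=1$ for each rung, $c_e=3$ for $v_3v_4,v_5v_6,u_3u_4,u_5u_6$, and $c_e=2$ for every other edge; hence $M'$ is exactly $P$, a perfect matching, and your algorithm accepts even though the perfect matching is far from unique. So the gap is not merely a missing induction --- no argument can close it, because the statement being induced on is wrong. (Consistently with your $\mathbb{F}_2$ identity: the four rung-free matchings cancel pairwise, leaving $\sum_M 1_M = 1_P$ with no contradiction.)

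The paper avoids this trap by splitting the problem into two steps. It uses the same parity subroutine (Lemma~\ref{lemma:parOP}, applied to the graph with an edge deleted rather than with its endpoints removed --- equivalent for this purpose) only to \emph{construct} the candidate matching, which is guaranteed correct under the promise of uniqueness; it then \emph{verifies} uniqueness by a separate, non-parity test: after reducing to the bipartite case as in Corollary~\ref{outerplanar}, it builds an auxiliary directed graph with an arc for each path consisting of a matching edge followed by a non-matching edge, so that a directed cycle exists iff the candidate matching admits an alternating cycle, i.e.\ iff the perfect matching is not unique. This auxiliary graph is outerplanar, and directed outerplanar reachability is in {\Log} \cite{abc06}, giving the Logspace bound. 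On the counterexample above, $v_1u_1,u_1u_2,u_2v_2,v_2v_1$ is an alternating cycle with respect to $P$, which that check detects and your parity-only test cannot. Your algorithm becomes correct if you append such an alternating-cycle search to it; as it stands, it decides the wrong language.
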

\begin{proof}
If the perfect matching in an outerplanar graph is unique, one can 
obtain one perfect matching in Logspace. For every edge, one just needs to 
compute the parity of the number of perfect matchings in the graph with
that particular edge removed. If this parity is $1$ then don't include this
edge in the perfect matching, otherwise do.
Now we just need to verify that the perfect matching thus constructed is unique.
As seen in Corollary~\ref{outerplanar} we can assume that the outerplanar graph is bipartite.
Now, if we consider an auxiliary directed graph obtained from this 
outerplanar graph 
by putting a directed edge starting from a vertex and ending in  another vertex
after following a matching edge starting at the vertex and then a non-matching 
edge from there, then,
this auxiliary graph will have a directed cycle if and only if the matching
we start with is not unique. It is possible to show that the auxiliary graph is 
outerplanar.  Finally, since the reachability in directed outerplanar graphs 
is in Logspace (\cite{abc06}), we have that \UPM\ in outerplanar graphs is in 
Logspace.
\end{proof}

\section {Discussion}
We saw in Section \ref{isolation} that a perfect matching in bipartite 
planar graphs 
can be isolated by assigning small weights to the edges.
In this section we discuss the possibility of generalizing
this result in two orthogonal directions. 
For non-bipartite planar graphs and for bipartite but non-planar graphs.
The motivation is to isolate a perfect matching in a graph by 
having non-zero circulation on it.

\subsection {Non-bipartite Planar Matching}
Though non-bipartiteness is an issue, we can get rid of it to
a certain extent, though as expected, not completely .
\begin{lemma} Perfect Matching problem in general planar graphs reduces
to that of almost grid graphs. 
\end{lemma}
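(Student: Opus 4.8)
The plan is to realize this statement as the explicit reduction underlying Lemma~\ref{almost}: given an arbitrary planar graph $G$, I would produce, in Logspace, an almost grid graph $G'$ together with a bijection between the perfect matchings of $G$ and those of $G'$. Once such a correspondence is in hand the reduction is immediate, since any instance of the decision, search, or uniqueness version on $G$ can be answered by solving the same version on $G'$ and transporting the answer back along the bijection; and because the correspondence will be weight preserving, the same holds for the minimum-weight versions.

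To build $G'$ I would start from a fixed planar embedding of $G$ and run the planar-to-grid embedding of \cite{abc06} (as adapted in \cite{dkl07}), which lays the vertices at grid points and turns each edge of $G$ into a path of grid edges. The one extra requirement is that each edge be replaced by a path of \emph{odd} length: the length-three path $u,a,b,v$ has exactly two perfect-matching states, one using the two end edges (encoding ``$e$ in the matching'') and one using the middle edge (encoding ``$e$ absent''), so odd subdivision keeps the matchings in bijection. For a bipartite $G$ this already yields a genuine grid graph, because an odd path can join two grid points of opposite color using only horizontal and vertical steps. The obstruction in the non-bipartite case is a parity clash: an edge joining two vertices that a spanning-tree $2$-coloring forces to the \emph{same} color sits at two grid points of equal parity, so every horizontal/vertical path between them has \emph{even} length, yet matching preservation demands \emph{odd} length. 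A single diagonal step resolves this, since a diagonal joins two grid points of the same color without flipping parity; inserting exactly one diagonal into an otherwise horizontal/vertical path makes its total length odd while connecting same-colored endpoints, so the matching gadget survives. Thus each non-bipartite edge is elongated into a long path carrying exactly one diagonal.

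The step I expect to be the main obstacle is confining all of these diagonals to a \emph{single} row and making them mutually parallel, as the definition of an almost grid graph demands, without introducing edge crossings that would destroy planarity or the matching bijection. My plan here is to reserve one dedicated row of a sufficiently fine grid, route each same-color edge down to that row, let it take its one diagonal step there, and return; since a row of a fine grid contains arbitrarily many blocks, arbitrarily many parallel diagonals fit side by side. The non-trivial part is ordering these excursions consistently with the planar embedding so that the descending and returning arms nest without crossing, and I would argue this using the cyclic order the planar embedding already imposes on the edges, laying out their excursions in that order as non-crossing nested arcs. Finally I would check that the diagonal, being an ordinary edge, participates in the matching exactly as the original non-bipartite edge did, so the global bijection between perfect matchings of $G$ and of the resulting almost grid graph $G'$ is preserved, completing the reduction.
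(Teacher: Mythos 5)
Your proposal is correct and follows essentially the same route as the paper: the paper proves this statement via Lemma~\ref{almost}, whose one-sentence proof says precisely that the grid-embedding procedure of \cite{dkl07,abc06} is applied and the edges causing non-bipartiteness are elongated into long paths placed so that only a single row needs (parallel) diagonal edges. Your write-up is in fact considerably more detailed than the paper's own sketch --- the odd-length subdivision gadget, the parity analysis forcing exactly one diagonal per odd edge, and the non-crossing nesting of the excursions into the reserved row are all left implicit in the paper.
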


Now it suffices to get a non-vanishing circulations in almost grid graphs
to solve planar matching question. 
Unfortunately we don't yet know any way of achieving this though we have
some observations which might be helpful.

\begin{figure}
\includegraphics[scale=0.25] {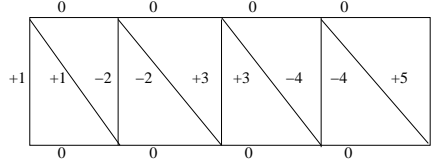}
\caption {Non-vanishing circulation in a non-bipartite graph}
\label{onerow}
\end{figure}
 
\begin{lemma}
One can have non-zero circulations for all the even cycles in the
graph in the Figure~\ref{onerow}. (The graph is simply one row of the grid with 
diagonals.)
\end{lemma}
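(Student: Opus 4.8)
The plan is to reuse the grid weighting of Lemma~\ref{wt} on the horizontal and vertical edges of the row and to assign weight $0$ to every diagonal. Recall that under this scheme all vertical edges have weight $0$, the horizontal edge between grid columns $j$ and $j+1$ in the fixed row $i$ has weight $(-1)^{i+j}(i+j+1)$, and every block has circulation $(-1)^{i+j}$. In the natural $2$-colouring of the ladder a diagonal joins a bottom vertex and a top vertex of the \emph{same} colour, so a cycle is even precisely when it uses an even number of diagonals. Hence only even-diagonal cycles can occur as alternating cycles of two perfect matchings, and these are the cycles whose circulation I must show is non-zero.

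First I would pin down the shape of the simple cycles. In a single row (a ladder) together with parallel diagonals, every simple cycle meets the top horizontal path and the bottom horizontal path in exactly one maximal subpath each, and crosses between them through exactly two \emph{connectors}, each connector being either a vertical rung or a diagonal. Combined with the parity observation, this forces an even cycle to use either two rungs and no diagonal, or two diagonals; so the lemma splits into just two cases.

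In the two-rung case the cycle is an ordinary rectangle of the grid, uses no diagonal at all, and is unaffected by the diagonals present in the interior; applying Lemma~\ref{blockdec} with the weighting of Lemma~\ref{wt} gives that $|circ(C)|$ equals the number of interior blocks, which is at least $1$, so the circulation is non-zero. In the two-diagonal case, with diagonals in blocks $p<q$, the cycle consists of the diagonal of block $p$, the top edges from column $p+1$ to $q+1$, the diagonal of block $q$, and the bottom edges from column $q$ back to $p$. Writing $a_j=(-1)^{i+j}(i+j+1)$ for the horizontal weights and using that diagonals and verticals contribute $0$, a direct evaluation of the alternating sum collapses to
\[
circ(C) = (-1)^{i+p+1}\Big(\sum_{j=p+2}^{q+1}(i+j+1) + \sum_{j=p}^{q-1}(i+j+1)\Big),
\]
which is $\pm$ a sum of strictly positive integers and hence non-zero.

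The main obstacle is the structural classification in the second paragraph: one must show that the ``thinness'' of a single row prevents any simple cycle from crossing the strip more than twice, so that no even cycle other than the two enumerated types can arise. I would establish this by a Jordan-curve argument, namely that a simple closed curve lying in the strip bounded by the two horizontal paths and built only from the available rungs and parallel diagonals separates the plane and therefore passes from the top boundary to the bottom boundary exactly twice. Once this is in place, the two computations above finish the proof; the only other delicate point is the sign bookkeeping that makes the alternating sum collapse to a sum of positive terms, which is routine given the weighting $a_j=(-1)^{i+j}(i+j+1)$ and the vanishing of every non-horizontal edge.
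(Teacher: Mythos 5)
Your proof is correct, but it takes a genuinely different --- essentially dual --- route from the paper's. The paper's proof zeroes out the \emph{horizontal} edges and places weights on the vertical and diagonal edges, so that each of the two overlapping grids (the ordinary ladder formed by horizontals plus rungs, and the sheared ladder formed by horizontals plus diagonals) independently has non-vanishing circulations; it then invokes the observation that any even cycle lies entirely in one of these two grids, so the generic argument of Lemma~\ref{wt} applies to each grid verbatim and no new computation is needed. You instead keep the weighting of Lemma~\ref{wt} completely unchanged (verticals $0$, horizontals $(-1)^{i+j}(i+j+1)$), assign $0$ to the diagonals, and verify the two cases by hand: rectangles via Lemma~\ref{blockdec}, and two-diagonal cycles via the explicit collapse of the alternating sum. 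That collapse does check out: the position-sign alternation $(-1)^m$ along the cycle cancels against the sign alternation $(-1)^{i+j}$ built into the weights, so every horizontal edge contributes with the same sign $(-1)^{i+p+1}$, and your formula (with $q-p$ positive terms from each of the top and bottom paths) is right. What the paper's choice buys is uniformity --- both cases become instances of the one grid lemma, with the shared horizontal edges forced to $0$ so the two weightings coexist; what your choice buys is that the isolating weights for the bipartite case need no modification when the diagonal row is added, and, more substantively, you actually prove the structural classification (a simple cycle in a one-row strip uses exactly two connectors, hence an even cycle uses two rungs or two diagonals) that the paper merely asserts with ``observe.'' One small simplification: your Jordan-curve argument can be replaced by a purely combinatorial one --- with three or more connectors, the vertex-disjoint subpaths the cycle must use on the bottom (or top) path would be forced to overlap, since the bottom segment closing the cycle would span the interval containing another bottom segment --- which avoids any appeal to topology.
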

\begin{proof} Observe that any even cycle in such a graph will either fall
in the grid or will fall in the grid formed by horizontal edge together with
diagonal edges.
Now, its easy to assign the weights as shown in the Figure~\ref{onerow} so that
all the horizontal edges get weight $0$ while vertical and diagonal edges get
weights so that any cycle in vertical or diagonal grid has non-vanishing 
circulation.
\end{proof}


In summary, non-bipartiteness seems to be an issue which is difficult to
get around. Hence, we keep the bipartiteness and next we explore the
possibility of generalizing our result for non-planar graphs.

\subsection {Bipartite Non-planar Matching}
Instead of looking at two dimensional grid we now consider three dimensional 
grids. The matching problem remains as hard as that for general 
bipartite graphs in this restriction as well.
\begin {lemma} One can embed any bipartite graph
in a three dimensional grid while preserving matchings.
\end {lemma}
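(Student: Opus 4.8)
The plan is to imitate the two-dimensional embedding of \cite{dkl07,abc06} used in the planar case, but to spend the extra dimension precisely where planarity was needed there, namely in routing crossing edges without collision. First I would fix a proper $2$-colouring of the bipartite graph $G$ and recall that the three-dimensional grid is itself bipartite, its two colour classes being the grid points whose coordinate sum is even and those whose coordinate sum is odd. I would place every vertex of $G$ at a distinct grid point, sending one colour class of $G$ to even-parity grid points and the other to odd-parity grid points. With this placement, any path along grid edges joining the images of two adjacent vertices of $G$ automatically has odd length, which is exactly the parity needed to keep the subdivided graph bipartite and to make the matching bookkeeping below go through.

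Next I would route each edge $(u,v)$ of $G$ as a path of grid edges from the image of $u$ to the image of $v$, insisting that these paths be pairwise disjoint except at common endpoints and that no path pass through the grid point of a third vertex. Here the third dimension does all the work: since every abstract graph admits a crossing-free drawing in $3$-space---for instance, lay the vertices out along a line and send the edges into distinct half-planes hinged at that line---one can discretise such a drawing onto the grid so that distinct edge-paths never share an internal grid vertex or a grid edge. This is exactly the step that fails in the plane for non-planar $G$, and is the reason we must pass to three dimensions; it is the analogue for $3$-space of the embedding procedure of \cite{abc06}.

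It then remains to check that the resulting grid graph has its perfect matchings in bijection with those of $G$. The key local fact is that replacing an edge by an internally vertex-disjoint path of odd length preserves perfect matchings with a one-to-one correspondence: if the path is $u,a_1,\dots,a_{2k},v$, then in any perfect matching of the subdivided graph the internal edges are forced and there are exactly two consistent possibilities. One uses $(u,a_1),(a_2,a_3),\dots,(a_{2k},v)$ and covers both $u$ and $v$ from inside the path, corresponding to $(u,v)$ lying in the matching of $G$; the other uses $(a_1,a_2),(a_3,a_4),\dots,(a_{2k-1},a_{2k})$ and leaves $u,v$ to be covered elsewhere, corresponding to $(u,v)$ not lying in the matching. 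Composing these local bijections over all edges, and using that the paths are disjoint so that the choices are independent, yields the claimed bijection of perfect matchings. If a routed path happens to be too short it can be lengthened by a detour to any odd length at least $3$ without affecting this argument.

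The main obstacle is the disjoint-routing step of the second paragraph: one must argue that a crossing-free drawing in $3$-space can actually be realised by vertex-disjoint and edge-disjoint grid paths of controlled, odd length, which is where the three-dimensional analogue of \cite{abc06} is required. Once such a routing is in hand, the parity bookkeeping and the matching bijection are routine, exactly as in the two-dimensional bipartite case.
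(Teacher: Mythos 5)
Your overall plan---parity-respecting placement of the two colour classes, routing edges as internally disjoint grid paths, and the odd-length subdivision bijection for perfect matchings---is the same approach the paper takes, and the subdivision argument you give is correct (it is the same device the paper itself uses when it replaces multi-edges by paths of length $3$). But there is one genuine gap: you never bound the degree of $G$, and without that the routing step of your second paragraph is impossible, not merely delicate. A grid point in the three-dimensional grid has at most $6$ incident grid edges, and since your edge-paths are pairwise disjoint except at common endpoints, each edge of $G$ incident to a vertex $u$ must consume a distinct grid edge at the image of $u$. So no discretisation of a $3$-space drawing, however clever, can realise a vertex of degree greater than $6$ as a single grid point; the ``book embedding into half-planes'' picture does not survive the passage to the grid at the vertices themselves. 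This is exactly why the paper's proof has a first step that you omit: \emph{first} make the degree of the graph bounded by $3$, and only \emph{then} use the third dimension to make room for crossings.

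The gap is fixable by a standard matching-preserving vertex expansion, which is presumably what the paper intends: replace a vertex $v$ of degree $d$ by a path $v_1, a_1, v_2, a_2, \dots, a_{d-1}, v_d$, attaching the $i$-th edge formerly at $v$ to $v_i$. In any perfect matching exactly one $v_i$ is matched externally, and the two even subpaths on either side of it then have forced internal matchings, giving a bijection with perfect matchings of the original graph; the gadget keeps all degrees at most $3$ and preserves bipartiteness (the $v_i$ inherit the colour of $v$). With degree at most $3$ (or even $6$) your routing and parity bookkeeping go through as you describe. As for the discretisation of the crossing-free $3$-space drawing, you are no more sketchy than the paper itself, whose entire proof is the two-sentence outline ``bound the degree by $3$, then use the third dimension for crossings''---but the degree bound is the one idea in that outline your proposal is missing.
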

\begin{proof} Firstly, one can make the degree of the graph
bounded by $3.$ Then one can use the third dimension to make the space for
crossings in the graph.
\end{proof}

\begin {oprob} Is the perfect matching problem for 
subgraphs of a three dimensional grid of height $2$ 
(constant height in general) in \NC\ ?
\end {oprob} 

The deterministic isolation of perfect matching is possible through
non-vanishing circulations as seen in Section \ref{isolation}.

\begin{oprob} Is small (log bit) weight non-vanishing circulation possible in 
every bipartite graph?
\end{oprob}

\subsection {Other Variations}
We know that 
the reachability in directed planar graphs reduces to bipartite planar 
matching while the reachability in layered grid graphs reduces to
the \UPM\ question in the same \cite{dkl07}.
Note that the isolation step in our algorithm works in Logspace. 
Solving the perfect matching question in bipartite planar graphs in Logspace
might be too strong to expect but at least one can ask the question
about \UPM\, which would put layered grid graph reachability in Logspace
or giving an orthogonal bound for the same.
\begin{oprob}
Is bipartite planar \UPM\ in \Log ?
\end{oprob}

We saw how to isolate a perfect matching in bipartite planar graph.
The isolation holds for maximum matching in bipartite planar graphs.
However, we do not know how to extract out the maximum weight perfect
matching in \NC.
\begin{oprob}
Is it possible to extract out the isolated maximum matching 
in \NC\  even for bipartite planar graphs?
\end{oprob}
Finally, the question of constructing a perfect matching in planar graphs
in \NC\ still remains open. 

\section {Acknowledgements}
We thank Meena Mahajan for useful discussion and comments on the preprint.

\end{document}